\newcommand{\be}{\begin{equation}}
\newcommand{\ee}{\end{equation}}
\def\beqa{\begin{eqnarray}}
\def\eeqa{\end{eqnarray}}
\def\bean{\begin{eqnarray*}}
\def\eean{\end{eqnarray*}}
\newcommand{\bd}{\mathbf{d}}
\newcommand{\R}{\mathbb{R}}
\newcommand{\nn}{{\nonumber}}
\newcommand{\eqn}[1]{(\ref{#1})}
\newcommand{\del}{\partial}
\newtheorem{thm}{Theorem}[section]
\newtheorem{defi}{Definition}[section]
\renewenvironment{thebibliography}[1]
         {\section*{References}\frenchspacing\small
          \begin{list}{[\arabic{enumi}]}
         {\usecounter{enumi}\parsep=2pt\topsep 0pt
         \settowidth{\labelwidth}{[#1]}
         \leftmargin=\labelwidth\advance\leftmargin\labelsep
         \rightmargin=0pt\itemsep=1pt\sloppy}}{\end{list}}
 \numberwithin{equation}{section}
\title{
Topological and dynamical aspects of Jacobi sigma models }
\date{}
\author[1,2]{Francesco Bascone}
\author[1]{Franco Pezzella}
\author[1,2]{Patrizia Vitale}
\affil[ ]{}
\affil[1]{\textit{\footnotesize INFN - Sezione di Napoli, Complesso Universitario di Monte S. Angelo Edificio 6, via Cintia, 80126 Napoli, Italy.}}
\affil[2]{\textit{\footnotesize Dipartimento di Fisica ``E. Pancini'', Universit\`a di Napoli Federico II, Complesso Universitario di Monte S. Angelo Edificio 6, via Cintia, 80126 Napoli, Italy.}}
\affil[ ]{}
\affil[ ]{\footnotesize e-mail: \texttt{francesco.bascone@na.infn.it, franco.pezzella@na.infn.it, patrizia.vitale@na.infn.it}}
\begin{document}
\maketitle
\begin{abstract}The geometric properties of sigma models with target space a Jacobi manifold are investigated. In their basic formulation, these are topological field theories - recently introduced by the authors  - 
 which share and generalise relevant features of Poisson sigma models, such as gauge invariance under diffeomorphisms and finite dimension of the reduced phase space. 
 After reviewing the main novelties and peculiarities of these models, we perform a detailed analysis of constraints and ensuing gauge symmetries in the Hamiltonian approach. Contact manifolds as well as locally conformal symplectic manifolds are discussed, as main instances of Jacobi manifolds.

\noindent Keywords: \it{Sigma Models;  Jacobi manifolds; Topological String. } 
\end{abstract}

\section{Introduction}

The present paper is a follow-up of  \cite{Bascone2021} where a non-linear sigma model with target space a Jacobi manifold has been introduced. The aim of the paper is to review  the findings of \cite{Bascone2021} in order  to clarify some important points which where not addressed in detail in the preceding paper. Moreover, proofs of main results, such as the dimensionality of the reduced phase space of the model and  the characterisation and closure of the algebra of gauge transformations, are re-derived,   overcoming  some simplifying  assumptions which where previously made. 

The main motivation for the search of a consistent definition of a sigma model with target space a Jacobi manifold is certainly the fact that it represents a natural, non-trivial generalisation of the well known Poisson sigma model. 
The latter  is a topological field theory which was first introduced \cite{Ikeda1994,Schaller1994} in relation with   two-dimensional field theories with non-trivial target space, e.g. gauge  and gravity models, as well as gauged WZW models. 
One interesting feature of the model is its intimate relation with the  geometry of the target space. Indeed, it makes it possible to unravel mathematical aspects of such manifolds by employing techniques from field theory. An example of this relation was given by Cattaneo and Felder in \cite{Cattaneo2001,Cattaneo2001a} where they show that the reduced phase space of the Poisson sigma model is actually the symplectic groupoid integrating the Lie algebroid associated with the Poisson structure of the target manifold. Moreover, the  model made it possible to give an alternative  derivation of Kontsevich  quantisation formula for Poisson manifolds, in terms of the Feynman diagrams coming from the perturbative expansion of the field theory \cite{Cattaneo2000}. Analogous questions, such as the 
 geometry of the reduced phase space  and the quantisation of Jacobi structures, could be addressed once the model is understood.

From a more physical point of view, 
another motivation for the introduction of this new model is the perspective of applying techniques from Topological Quantum Field Theory to the analysis of new string backgrounds, as well as the possibility of  obtaining some useful description of known models  within the  framework of Jacobi manifolds, as it is the case for the Poisson setting.

The Poisson sigma model is described in terms of  fields $(X, \eta)$ which are formally associated with   a bundle map from the tangent bundle of a source space $\Sigma$, a two-dimensional orientable manifold possibly with boundary, to the cotangent bundle of the target Poisson manifold $M$. {In particular,} $X$ is the base map,  describing the embedding of $\Sigma$ into $M$, while $\eta$ is the fibre map, an auxiliary field which is in particular a one-form on $\Sigma$ with values in the pull-back of the cotangent bundle over $M$. In general it is not possible to integrate out such an auxiliary field, unless the target space is a symplectic manifold. In this case the Poisson bi-vector can be inverted and the equations of motion can be solved for $\eta$. The resulting action is that of a topological A-model \cite{Witten1988, Witten1998}, {i.e.,} $S=\int_{\Sigma} X^*(\omega)$, where $\omega=\Pi^{-1}$ is the symplectic form on $M$, $\Pi$ the Poisson bi-vector field (fulfilling  the condition of zero Schouten bracket $[\Pi, \Pi]_S=0$)  and $X^*$ denotes the pull-back map. 

Our aim in \cite{Bascone2021} was to investigate the possibility of  relaxing the condition $[\Pi, \Pi]_S=0$ to what is probably the  most  natural generalisation, represented by  a Jacobi structure.  
The latter is specified  by  a bi-vector  field  $\Lambda$  and a vector field $E$,  the so called Reeb vector field, satisfying 
\be
[\Lambda,\Lambda]_S= 2E\wedge\Lambda\;\;\;\, {\rm and} \;\;\;\,[E,\Lambda]_S=0.  
\ee
 The triple  $(M, \Lambda, E)$ defines  a Jacobi manifold.
A Poisson manifold is  a particular case with $E=0$ everywhere.  Two main families of  Jacobi manifolds, with all other cases being recovered as intermediate situations\footnote{It is possible to show (see for example \cite{Vaisman2002} Thm.  11) that a generic Jacobi manifold admits a   foliation by locally conformal symplectic and/or contact leaves. Examples of Jacobi manifolds with "nonpure"  characteristic foliation, namely with leaves  of odd and even dimension, i.e., contact and l.c.s. leaves may be found in \cite{deLeon1997}.},  are represented by contact and locally conformal symplectic manifolds, which we will consider later in the paper for applications of our model.

From a Jacobi structure one can construct Jacobi brackets on the algebra of functions on  $M$ with the following definition:
\begin{equation}
\{f, g\}_J=\Lambda(d f, d g)+f(E g)-g(E f).
\end{equation}
The latter satisfy the Jacobi identity, but unlike Poisson brackets, violate the Leibniz rule; in other words,  the Jacobi bracket still endows the algebra of functions on $M$ with a Lie algebra structure, but it is not a derivation of the point-wise product among functions. 
Thus,  the bi-vector field $\Lambda$ may be ascribed to the family of   bi-vector fields violating Jacobi identity, such as  "twisted" and "magnetic" Poisson structures (see for example \cite{Szabomag, Szabononass}) which recently received some interest in relation  with the quantisation of higher structures (their Jacobiator being non-trivial) and with the description of non-trivial geometric fluxes in string theory. The violation of Jacobi identity is however under control, because the latter is recovered by the full Jacobi bracket, which is alternatively defined as the most general
local bilinear operator on the space of real functions $ C^\infty(M,\R)$ which is skew-symmetric and
satisfies Jacobi identity \cite{deLeon1997}, and this makes its study especially interesting to us.

The Jacobi sigma model generalises the construction of the Poisson sigma model via the inclusion of an additional field on the source manifold, which is necessary in order to   take into account the  new background vector field $E$. The field variables  of the model are represented  by $(X, \eta, \lambda)$, where $X: \Sigma \to M$ is the usual embedding map, while $(\eta,\lambda)$ are put together to give elements of $\Omega^1(\Sigma,  X^*(T^*M\oplus \R))$, being $T^*M\oplus \R=J^1 M$ the vector bundle of 1-jets of real functions on $M$. The resulting theory is a two-dimensional topological non-linear gauge theory describing strings sweeping  a Jacobi manifold. The following main results {were} achieved in \cite{Bascone2021}:
\begin{itemize}
\item  Similarly to the Poisson sigma model, the reduced phase space can be proven to be  finite-dimensional, but while for the Poisson case the dimension is $2\text{dim}M$, for the Jacobi sigma model the dimension is $2\text{dim}M-2$.
\item The model  may be related to a Poisson sigma model with target space $M\times \R$ within a ``Poissonization" procedure. The latter approach has been pursued  in \cite{Chatzistavrakidis2020} in relation with non-closed fluxes, and \cite{Vancea:2020bwu} with reference to gauge symmetry. 
\item  The auxiliary fields $(\eta,\lambda)$ can be integrated out, both for contact and locally conformal symplectic manifolds so to get a model which is solely defined in terms of the field $X$ and its derivatives.
\item By including a dynamical term which is proportional to  the metric tensor of the target manifold, it is possible to obtain a Polyakov action. The  background metric and the $B$-field are expressed in terms of   the Jacobi structure.  A non-zero three form, $H= dB$, may occur, depending on the details of the model. 
\end{itemize}

As already anticipated, the main purpose of this contribution, is to give an account of the progress made in our understanding of the Jacobi sigma model, both for the  topological and the dynamical version. Since the matter is recent and likely to be further developed  we believe it appropriate to present a detailed, self-contained  review of the material already covered in \cite{Bascone2021}. 
  
The paper is organised as follows. In section \ref{Secpoissonsigmamodel} we present a short summary of the Poisson sigma model. In section \ref{Secjacobimanifolds} we review the notion of Jacobi manifold and Jacobi structure, and we describe the procedure of Poissonization of a Jacobi manifold $M$ yielding to a higher dimensional manifold $M \times \mathbb{R}$. Although strictly not necessary for the purposes of the present paper, the latter has played an important role in suggesting the original formulation of the model.   In section \ref{secjacobisigmamodel}
the action functional   for the Jacobi sigma model is stated. The model in the canonical formulation is  constrained, with first class constraints generating gauge transformations. In comparison with  \cite{Bascone2021} the analysis of first and second class constraints is considerably enlarged and clarified in \ref{dirco}.  
Gauge transformations are implemented by generating functionals $K_{\beta,\lambda_t}$ through Poisson brackets. By identifying the gauge parameters $(\beta,\lambda_t)$  with sections of the pullback  bundle $J^1M$ we will show that the algebra of gauge generators closes off-shell under milder assumptions than in \cite{Bascone2021}. To obtain this result, the notion of generalised Koszul bracket \cite{Vaisman2000,Kerbrat1993}, which extends to Jacobi manifolds the Koszul bracket defined on Poisson manifolds, has been used. In \ref{redph} the constrained phase space is reduced with respect to gauge symmetries and shown to be finite dimensional with dimension equal to 2dimM-2. This result, which is the content of Theorem \ref{mainthm}, was already presented in \cite{Bascone2021}. Here, thanks to a better understanding of the constrained phase-space, the proof of the theorem is improved and given for general field configurations\footnote{Specifically, the auxiliary fields  $\lambda_t$ and $\beta_i$ need not be related by the  condition $\beta_i= \del_i \lambda_t$ and $\lambda_u$ need not be zero, as invoked in \cite{Bascone2021}.}. 
 In section \ref{Secexamples} we consider the model in both cases of the target space being a contact and locally conformal symplectic manifold in a general fashion. We thus discuss in more detail than in \cite{Bascone2021} noteworthy examples, such as   the  manifolds  $SU(2)$ and $SU(2)\times S^1$ as instances of contact and LCS target spaces respectively. Finally,  section \ref{Secdynamical} contains a review of the dynamical case introduced in \cite{Bascone2021}. This consists in supplementing the action of the Jacobi sigma model with   a dynamical term which includes   a metric tensor on the target space. It is very much inspired to the dynamical Poisson sigma model discussed in \cite{Schupp2012}, with some interesting differences.  The emerging model, besides being   non-topological,  yields a Polyakov action with background metric $g$ and $B$-field determined by the Jacobi structure involved. We do not have new results in this respect and a complete understanding of the model is still lacking, while  we are presently working on it.
 We conclude with a final discussion of the results with remarks and perspectives.
 
 %%%%%%%%%%%%%%%%%%%%%%%%%%%%%%%%%%%%%%%%%%
\section{Poisson sigma model}\label{Secpoissonsigmamodel}

The Poisson sigma model is a two-dimensional topological field theory with target space a Poisson manifold, first introduced by Ikeda and independently by Schaller and Strobl in the context of two-dimensional gravity \cite{Schaller1994} and later widely investigated in relation with other models such as two-dimensional Yang--Mills and gravity theories, as well as in relation with deformation quantisation and branes. Related to that, we mention \cite{Cattaneo2000} by Cattaneo and Felder where  the Poisson sigma model is used to give a physical interpretation of Kontsevich quantisation formula  in terms of Feynman diagrams of the perturbative expansion of the model, and the work \cite{Cattaneo2001} by the same authors, in which they prove that the reduced phase space of the model is the symplectic groupoid integrating the Lie algebroid associated with the Poisson manifold, inspiring later works on the integrability of Lie algebroids \cite{Crainic2007}. A brief introduction to the topic can be found in \cite{Schallera}.

A smooth manifold $M$ is a Poisson manifold if there exists a bi-vector $\Pi \in \Gamma(\Lambda^2 TM)$ satisfying the Jacobi identity:
\begin{equation}
0=[\Pi, \Pi]_{S}^{i j k}=\Pi^{i \ell} \partial_{\ell} \Pi^{j k}+\operatorname{cycl}(i j k),
\end{equation}
where $[\cdot, \cdot]_S: \Lambda^{p}(M) \times \Lambda^{q}(M) \rightarrow \Lambda^{p+q-1}(M)$ is the Schouten--Nijenhuis bracket on the algebra of multivector fields on the manifold $M$. The Poisson bracket on $C^{\infty}(M)$ is then defined as $\{f,g \}=\Pi(df, dg), \, \, f, g \in C^{\infty}(M)$.

Let $\Sigma$ be a two-dimensional oriented manifold, possibly with boundary, and $(M, \Pi)$ {an} $m$-dimensional Poisson manifold. The topological Poisson sigma model is defined by the bosonic real fields $(X, \eta)$, with $X: \Sigma \rightarrow M$ the usual embedding map and $\eta \in \Omega^1(\Sigma, X^* (T^*M))$ a one-form on $\Sigma$ with values in the pull-back of the cotangent bundle over $M$. The action of the model is given by
\begin{equation}\label{actionpoissonsigma}
S=\int_{\Sigma}\left[\eta_{i} \wedge d X^{i}+\frac{1}{2} \Pi^{i j}(X) \eta_{i} \wedge \eta_{j}\right], \;\;\; i,j= 1,\dots, {\rm dim} M
\end{equation}
where $dX\in \Omega^1(\Sigma, X^* (TM))$ and the contraction of covariant and contravariant indices is relative to the 
pairing between differential forms on $\Sigma$  with values in $X^* (T^*M)$ and $X^* (TM)$, respectively. 
 It is induced
by the natural pairing between $T^*M$ and $TM$ and yields a two-form on $\Sigma$.  The action is manifestly invariant under diffeomorphisms of $\Sigma$, hence it describes a topological model.
In order to make the world-sheet dependence explicit in \eqn{actionpoissonsigma}, we introduce local coordinates  $u^{\mu}$ ($\mu=0,1)$ on $\Sigma$ so that $dX^i=\partial_{\mu} X^i du^{\mu}$, $\eta_i=\eta_{\mu i} du^{\mu}$ yielding
\begin{equation}
S= \int_{\Sigma} d^2u \left[\epsilon^{\mu \nu} \eta_{\mu i} \partial_{\nu} X^i+\frac{1}{2} \Pi^{ij}(X) \epsilon^{\mu \nu}\eta_{\mu i} \eta_{\nu j} \right].
\end{equation}

The variation of the action leads to the following equations of motion in the bulk:
\begin{equation}\label{eqmopoisson1}
dX^i+\Pi^{ij}(X)\eta_j=0,
\end{equation}
\begin{equation}\label{eqmopoisson2}
d\eta_i+\frac{1}{2}\partial_i \Pi^{jk}(X)\eta_j \wedge \eta_k=0.
\end{equation}
One thing to notice is that the consistency of the equations of motion requires $\Pi(X)$, as a background field, to satisfy the Jacobi identity \footnote{This can be understood by acting on \eqn{eqmopoisson1} with the exterior derivative, then using again \eqn{eqmopoisson1} and finally using \eqn{eqmopoisson2} on the result.}. 

If the manifold $\Sigma$ has a boundary, then it is necessary to impose  suitable boundary conditions such that the boundary term $\int_{\partial \Sigma} \delta X^i \eta_i$ vanishes. Many possibilities have been considered \cite{Falceto2010,Calvo2005,Calvo2006,Cattaneo2013}. Interestingly, these  have been associated with  different brane solutions when the Poisson sigma model is considered in the framework of topological string theory.   Indeed, taking the restriction of the field $X_{|\partial \Sigma}:\partial \Sigma \rightarrow N$, for some closed submanifold $N$ (the brane), there may be different conditions for $N$. The one usually chosen (in particular, it was used by Cattaneo and Felder in \cite{Cattaneo2001}), is the following: 
\begin{equation}\label{bcpoisson}
\eta(u)v=0 \, \, \, \, \forall \, v \in T(\partial \Sigma), \, \, u \in \partial \Sigma.
\end{equation}
The Poisson sigma model comprises a variety of  models. The most obvious is the one corresponding to a trivial Poisson structure $\Pi=0$, for which one simply has a BF model with action $\int_{\Sigma} \eta_i \wedge dX^i$. An interesting non-trivial example is the case corresponding to a linear Poisson structure $\Pi^{ij}={f^{ij}}_k X^k$, leading to a non-Abelian BF theory. In this case, in fact, the Jacobi identity for $\Pi$ makes $M$ the dual of a Lie algebra with structure constants ${f^{ij}}_k$, and $\eta$ takes the role of a one-form connection. Other cases are two-dimensional Yang-Mills theory (which is obtained by using a linear Poisson structure and including a Casimir function of $M$ as a non-topological term in the action), gauged Wess--Zumino--Witten models  and  two-dimensional gravity models. A useful review where all these models are considered as derived from the Poisson sigma model is \cite{Ikeda2017}. 

An important remark concerns  the auxiliary fields $\eta_i$, which encompass conjugate momenta of the configuration fields $X^i$ and Lagrange multipliers. On using the equations of motion they  can be integrated away, resulting in a second order action, only if the target space is a symplectic manifold. In this case, in fact, the Poisson bi-vector can be inverted to a symplectic form $\omega$, and the resulting action is that of the so-called A-model, with action $S=\int_{\Sigma} \omega_{ij} \,dX^i \wedge dX^j$. In the language of strings, this corresponds to a topological action with  $B$-field coinciding with the symplectic two-form.

We now focus on the Hamiltonian approach. Let us choose the topology of the world-sheet as $\Sigma=\mathbb{R} \times [0,1]$ (we are considering open strings), where we identify the local coordinates $(u_0, u_1)$ with time and space, respectively,   $u_0 = t \in \mathbb{R}$, $u_1 = u \in I= [0,1]$. By further denoting $\beta_i=\eta_{t i}$, $\zeta_i=\eta_{u i}$, $\dot{X}=\partial_t X$ and $X'=\partial_u X$, the first order Lagrangian can be written as
\begin{equation}\label{lagpo}
L(X, \zeta; \beta)=\int_I du \left[-\zeta_i \dot{X}^i+\beta_i\left(X'^i+\Pi^{ij}(X)\zeta_j \right) \right],
\end{equation}
from which it is clear that $X$ and $-\zeta$ are canonically conjugate variables, with Poisson brackets 
\begin{equation}\label{canpoi}
\{\zeta_i(u), X^j(v)\}=-{\delta_i}^j \delta(u-v),
\end{equation}
while all the other brackets are vanishing. 

Notice that, in this notation, the boundary condition \eqn{bcpoisson} means that $\beta_{|\partial I}=0$,  $\beta=\eta_t$ being the component of $\eta$ tangent to the boundary.

Since $\beta$ has no conjugate variable, it has to be considered as a Lagrange multiplier imposing the constraints
\begin{equation}\label{constraintpoisson}
X'^i+\Pi^{ij}(X) \zeta_j=0,
\end{equation}
from which it follows that the Hamiltonian 
\begin{equation}
H_{\beta}= -\int_I du \,\beta_i \left[X'^i+\Pi^{ij}(X)\zeta_j \right]
\end{equation}
is pure constraint and the constraint manifold $\mathcal{C}$ (the space of solutions of \eqn{constraintpoisson}) can also be understood as the common zero set of the functions $H_{\beta}$ for all $\beta$ satisfying the boundary condition $\beta(0)=\beta(1)=0$. 
This implies that  the system is invariant under time-diffeomorphisms. The infinitesimal generators are the Hamiltonian vector fields associated with $H_\beta$ by the canonical Poisson bracket \eqn{canpoi}
\be\label{gagen}
\xi_\beta= \{H_\beta, \cdot \}=\int du\, \left(\dot X^i \frac{\delta}{\delta X^i} + \dot \zeta_i \frac{\delta}{\delta \zeta_i}\right),
\ee
with
\begin{equation}\label{sympoisson1}
 \dot X^i=-\Pi^{ij}\beta_j,
\end{equation}
\begin{equation}\label{sympoisson2}
\dot  \zeta_i=\partial_u \beta_i-\partial_i \Pi^{jk}\zeta_j \beta_k \,. 
\end{equation}
The model is also invariant under space-diffeomorphisms $ f(u)\del_u $, the latter   being the Hamiltonian vector field associated with $H_\beta$ if one chooses $\beta_j = f(u) \zeta_j$ \cite{Cattaneo2001a}. However, in order for the algebra of generators to close, one has to extend the dependence of $\beta$ according to   $\beta (u)\rightarrow  \beta(u, X(u))$, with $\beta= \beta_i dX^i$ the associated one-form in local coordinates. Then  it is possible to check that
\be
\{H_\beta, H_{\tilde\beta}\}= H_{[\beta,\tilde\beta]}\label{gauge}
\ee
with 
\be
[\beta, \tilde\beta]= d\langle \beta,\Pi(\tilde\beta)\rangle - i_{\Pi(\beta)} d\tilde \beta + i_{\Pi(\tilde\beta)} d \beta \label{Kos}
\ee
 the Koszul bracket among one-forms on the target manifold $M$, which satisfies the Jacobi identity provided $\Pi$ is a Poisson tensor. $\langle\; ,\;\rangle$ denotes the natural pairing between $T^*M$ and $TM$.   Following \cite{Cattaneo2001}, Eq. \eqn{Kos} may be extended to $P_0\Omega^1(M)$, the latter being the algebra of continuous maps $\beta:I \rightarrow  \Omega^1(M)$, with the property {$\beta(0)= \beta(1)=0$}, according to
 \be\label{Kosexte}
 [\beta, \tilde \beta](u)=  [\beta(u), \tilde\beta(u)].
 \ee
Eq. \eqn{gauge} shows that the map $\beta\rightarrow H_\beta$ is a Lie algebra homomorphism, the Hamiltonian constraints are first class and the Hamiltonian vector fields \eqn{gagen} generate gauge transformations.
Hence, the reduced phase space of the model is defined in the usual way as the quotient $\mathcal{G}=\mathcal{C}/H$, where $H$ is the gauge group.
It can be proven \cite{Cattaneo2001} that the reduced phase space is a finite-dimensional manifold of dimension 2dim(M). A generalisation  of this result holds true for  the Jacobi sigma model, therefore we shall postpone the proof to a forthcoming section.

%%%%%%%%%%%%%%%%%%%%%%%%%%%%%%%%%%%%%%%%%%
\section{Jacobi manifolds}
\label{Secjacobimanifolds}

Jacobi brackets were first introduced by Lichnerowicz in \cite{Lichnerowicz1978} as a natural generalisation of Poisson brackets, such that the Leibniz rule is replaced by a weaker condition. The brackets are defined by means of a bi-linear bi-differential operator acting on the algebra of functions on a smooth manifold $M$ as
\begin{equation}
\{f, g\}_J=\Lambda(d f, d g)+f(E g)-g(E f),
\end{equation}
where $\Lambda \in \Gamma(\Lambda^2 TM)$ is a {bi-vector} field and $E \in \Gamma (TM)$ is a vector field (the Reeb vector field) on the manifold $M$,  satisfying
\begin{equation}\label{jacobilambda}
[\Lambda, \Lambda]_S=2 E \wedge \Lambda, \quad [\Lambda, E]_S=\mathscr{L}_{E}\Lambda=0,
\end{equation}
where $\mathscr{L}$ denotes the Lie derivative operator.\footnote{In other words, the {bi-vector} field $\Lambda$  fails to satisfy Jacobi identity by a term  given by $E\wedge \Lambda$,  representing the Jacobiator.}
It will prove useful to write the explicit expression of these relations in coordinates: 
\begin{equation}\label{jacobiidentitygen}
\Lambda^{pi} \partial_p \Lambda^{jk}+{\rm cycl\, perm}\{ijk\}
=E^i \Lambda^{jk} +{\rm cycl\, perm}\{ijk\},
\end{equation}
\begin{equation}\label{liederivelambda}
E^k \partial_k \Lambda^{ij}-\Lambda^{kj} \partial_k E^i-\Lambda^{ik} \partial_k E^j=0.
\end{equation}
Jacobi brackets are linear, skew-symmetric and satisfy Jacobi identity just like Poisson brackets. The Leibniz rule  is replaced by the  condition
\begin{equation}
\{f, g h\}_J=\{f, g\}_J h+g\{f, h\}_J+g h(E f).
\end{equation}
This means that the Jacobi brackets still endow the algebra of functions $\mathcal{F}(M)$ with the structure of a Lie algebra \footnote{In particular a local Lie algebra, in the sense of Kirillov \cite{Kirillov1976}, since it holds that:  $\text {supp }(\{f, g\} )\subseteq\text {supp }( f) \cap \text {supp } (g)$.}, but, unlike the Poisson brackets,  they are not a derivation of the point-wise product among functions.
 Jacobi brackets are a generalisation of  Poisson brackets since the latter can be obtained from the former  if the Reeb vector field is vanishing, $E=0$.
A Jacobi manifold $(M, \Lambda, E)$ is then defined as a smooth manifold equipped with a Jacobi structure \footnote{A generalisation in terms of complex line bundles can be found in \cite{LV}.}.
Two main classes  of Jacobi manifolds are represented by locally conformal symplectic manifolds (LCS) and contact manifolds. 
The former ones are even-dimensional manifolds endowed with a non-degenerate two-form $\omega \in \Omega^2(M)$ and a closed one form $\alpha\in \Omega^1(M)$ with the property that \cite{Marle1991, Vaisman1985}
\be \label{lcscond}
d\omega+\alpha\wedge \omega=0. 
\ee
The latter condition is equivalent to the statement that  in each  local chart     $U_i\subset M$ there always exists $f\in C^\infty(U_i)$ such that $\alpha$ is exact and $\omega$ is conformally equivalent to a symplectic form $\Omega$ according to:
\be 
\alpha= df,\;\;\; \omega= e^{-f} \Omega.
\ee
Then, the  global structures   which qualify a LCS manifold as a Jacobi one, namely the pair $(\Lambda, E)$, are uniquely defined in terms of $(\alpha, \omega)$ as follows:
\begin{equation}\label{lcscond2}
\iota_E \omega=-\alpha, \quad \iota_{\Lambda(\gamma)}\omega=-\gamma \quad \forall \, \gamma \in T^*M.
\end{equation}
Globally conformal symplectic and symplectic  manifolds are particular cases of LCS. 

Contact manifolds are in turn odd-dimensional manifolds which are endowed with a  contact structure, namely a one-form $\vartheta$ satisfying $\vartheta \wedge (d\vartheta)^n =\Omega$, where $2n+1$ is the dimension of the manifold and $\Omega$ a volume form. 
The contact form is defined as an equivalence class of one-forms up to multiplication by a non-vanishing function. It is possible to endow the algebra of functions on a contact manifold with  a Lie algebra structure \cite{Asorey2017}, which reads
\begin{equation}\label{con1}
[f, g ]\vartheta \wedge(d \vartheta)^{n} : =(n-1) d f \wedge d g \wedge \vartheta \wedge(d \vartheta)^{n-1}+(f d g-g d f) \wedge(d \vartheta)^{n}.
\end{equation}
The latter is local by construction and satisfies Jacobi identity. It  may be seen to be equivalent to the standard   definition of Jacobi bracket, by implicitly defining $\Lambda$ and $E$ as follows:
\begin{equation}\label{contactcond}
\begin{aligned}
{} & \iota_E \left( \vartheta \wedge(d \vartheta)^{n}\right)=(d \vartheta)^{n} \\ &
\iota_{\Lambda} \left(\vartheta \wedge(d \vartheta)^{n}\right)=n \vartheta \wedge(d \vartheta)^{n-1}.
\end{aligned}
\end{equation}
The latter conditions imply\footnote{In three dimensions they are actually equivalent statements.} that
\begin{equation}\label{contactcond2}
\iota_E \vartheta=1, \quad \iota_{E} d\vartheta=0,
\end{equation}
as well as 
\begin{equation}\label{contactcond3}
\iota_{\Lambda} \vartheta=0, \quad  \iota_{\Lambda} d\vartheta=1.
\end{equation}

Noteworthy examples of contact manifolds are represented by three-dimensional semi-simple Lie groups, where the contact one-form can be chosen to be  one of the basis left-invariant (resp. right-invariant) one-forms on the group manifolds. Non-trivial examples of LCS manifolds may be easily constructed by considering the product  $M\times S^1$, with $M$ a contact manifold \cite{Vaisman1985}. In Section \ref{topjacsu2} we shall consider in some detail the case of   the Lie group $SU(2)$, which has been  been widely studied in the literature in relation with Poisson sigma models ({see, for example,}  \cite{Bonechi2005, Calvo2003}) and we shall only sketch its LCS counterpart $SU(2)\times S^1$.  

\subsection{Poisson structure on $M \times \mathbb{R}$ from $(M, J)$}
An important result due to Lichnerowicz  shows that it is possible to associate with any Jacobi manifold a higher dimensional Poisson manifold. The exact statement goes as follows
\begin{thm}\label{thpoissonization} \cite{Lichnerowicz1978}
{Given a Jacobi structure $J(f,g)=\Lambda(df, dg)+f(E g)-g(E f)$ on $M$, the product manifold $M \times \mathbb{R}$ carries a family of equivalent Poisson structures with  Poisson bi-vector $P$ defined as
\begin{equation}\label{poissonizat}
P \equiv e^{-\tau}\left(\Lambda+\frac{\partial}{\partial \tau} \wedge E\right), \;\;\;\;  \tau \in \mathbb{R}.
\end{equation}
}
\end{thm}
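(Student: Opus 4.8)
The content of the statement is that the bivector $P$ of \eqref{poissonizat} satisfies $[P,P]_S=0$ on $N:=M\times\R$ (plus a subsidiary remark on the word ``family''). The plan is to prove this by a direct evaluation of the Schouten--Nijenhuis bracket, turning the two defining relations \eqref{jacobilambda} of a Jacobi structure into the vanishing of $[P,P]_S$. I would use adapted coordinates $(x^i,\tau)$ on $N$, in which the only nonvanishing components of $P$ are $P^{ij}=e^{-\tau}\Lambda^{ij}$ and $P^{i\tau}=-P^{\tau i}=\pm\, e^{-\tau}E^{i}$ (the sign fixed by the orientation chosen in the wedge $\partial_\tau\wedge E$), with $P^{\tau\tau}=0$, and apply the same coordinate expression for the bracket that was used earlier for $\Pi$, namely $[P,P]_S^{ABC}=P^{AL}\partial_L P^{BC}+\mathrm{cycl}(ABC)$ with capital indices ranging over $\{1,\dots,\dim M,\tau\}$. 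Because a trivector on $N$ is totally antisymmetric and $N$ has a single $\tau$-direction, the only components that are not identically zero are $[P,P]_S^{ijk}$, with all three legs along $M$, and $[P,P]_S^{\tau jk}$; the proof therefore splits into exactly two computations.

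For the mixed component I would expand $[P,P]_S^{\tau jk}$ using $P^{\tau\tau}=0$ and $\partial_\tau\Lambda^{ij}=\partial_\tau E^i=0$. The contributions in which $\partial_\tau$ falls on a factor $e^{-\tau}$ are quadratic in $E$ and symmetric in $j,k$, hence drop out of the antisymmetric component, and what survives is $e^{-2\tau}$ times $E^{l}\partial_l\Lambda^{jk}-\Lambda^{jl}\partial_l E^{k}+\Lambda^{kl}\partial_l E^{j}$, i.e.\ (after an index relabelling) $e^{-2\tau}$ times the left-hand side of \eqref{liederivelambda}. This vanishes precisely because of the second Jacobi relation $\mathscr{L}_E\Lambda=[\Lambda,E]_S=0$.

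For the purely ``$M$'' component the exponential prefactor does the real work. The pieces carrying only $M$-derivatives reproduce $e^{-2\tau}(\Lambda^{il}\partial_l\Lambda^{jk}+\mathrm{cycl})$, while the pieces in which $P^{i\tau}\partial_\tau$ hits $e^{-\tau}\Lambda^{jk}$ produce $e^{-2\tau}(E^{i}\Lambda^{jk}+\mathrm{cycl})$ up to sign, and by the first Jacobi relation --- $[\Lambda,\Lambda]_S=2E\wedge\Lambda$, equivalently \eqref{jacobiidentitygen} once the summed index is moved into the position dictated by the Schouten-bracket formula --- these two sums are opposite, so that $[P,P]_S^{ijk}=0$. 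The same argument reads invariantly as follows: write $P=e^{-\tau}Q$ with $Q=\Lambda+\partial_\tau\wedge E$; since $[\partial_\tau,\cdot]_S$ annihilates every multivector pulled back from $M$, the graded Leibniz rule collapses $[Q,Q]_S$ to $[\Lambda,\Lambda]_S=2E\wedge\Lambda$ (so that $Q$ alone is \emph{not} Poisson), whereas differentiating the conformal factor $e^{-\tau}$ brings in the contraction $\iota_{d\tau}Q=E$ and an extra term $\mp\,2e^{-2\tau}\,E\wedge\Lambda$ that cancels it. This makes transparent why \emph{both} relations in \eqref{jacobilambda}, and the factor $e^{-\tau}$, are indispensable: without the exponential the $\partial_\tau$-derivatives would not generate the term that absorbs the Jacobiator anomaly $2E\wedge\Lambda$.

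Finally, for the ``family of equivalent Poisson structures'' I would observe that $\mathscr{L}_{\partial_\tau}P=-P$, so that $P$ is homogeneous of degree $-1$ under translations of $\tau$ --- this is the \emph{homogeneous} Poisson structure alluded to in the Introduction --- and hence every shift $\tau\mapsto\tau+c$ is a diffeomorphism sending $P$ to a constant multiple $e^{\pm c}P$ of itself; these rescaled bivectors make up the announced family, all carrying the same Poisson geometry. I expect the only genuine obstacle here to be bookkeeping rather than conceptual content: keeping signs, index placements, and the normalisations of the wedge products and of the Schouten bracket mutually consistent --- in particular fixing the orientation of $\partial_\tau\wedge E$ in accordance with the conventions of \eqref{jacobilambda}--\eqref{liederivelambda} --- so in practice I would carry one of the two components out fully in index notation as a sanity check before asserting the general identity.
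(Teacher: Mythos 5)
The paper offers no proof of this statement at all: it is quoted as a classical result of Lichnerowicz with a citation, and the surrounding text only uses it (e.g.\ to note that for contact manifolds the construction is a symplectification). So there is nothing internal to compare your argument against; judged on its own terms, your verification is correct and complete in outline. The decomposition into the two nontrivial trivector components is exactly right: with the convention of Eq.~\eqn{jacobiidentitygen} (contraction on the \emph{first} index of $\Lambda$) and the component assignment $P^{\tau i}=e^{-\tau}E^{i}$, $P^{ij}=e^{-\tau}\Lambda^{ij}$, the all-$M$ component of $[P,P]_S$ collapses to $e^{-2\tau}\bigl[(\Lambda^{li}\partial_l\Lambda^{jk}+\mathrm{cycl})-(E^i\Lambda^{jk}+\mathrm{cycl})\bigr]=0$ by \eqn{jacobiidentitygen}, while in the $\tau jk$ component the $E^jE^k$ terms cancel by symmetry and the remainder is $-e^{-2\tau}$ times the left-hand side of \eqn{liederivelambda}, as you state. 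Your invariant reformulation $P=e^{-\tau}Q$, with $[Q,Q]_S=[\Lambda,\Lambda]_S=2E\wedge\Lambda$ cancelled by the term generated through $\iota_{d\tau}Q=E$ when the conformal factor is differentiated, is the standard conceptual way to see why both conditions in \eqn{jacobilambda} and the factor $e^{-\tau}$ are needed. The bookkeeping caveat you raise is genuine and worth the sanity check you propose: note that the paper itself contracts the second index in the Poisson case ($\Pi^{i\ell}\partial_\ell\Pi^{jk}$) but the first index in \eqn{jacobiidentitygen} ($\Lambda^{pi}\partial_p\Lambda^{jk}$), and these differ by a sign, so the relative sign of $\partial_\tau\wedge E$ in \eqn{poissonizat} only comes out right once one convention is fixed throughout. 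Your reading of ``family of equivalent Poisson structures'' via the homogeneity $\mathscr{L}_{\partial_\tau}P=-P$, with $\tau$-translations rescaling $P$ by a constant, is a reasonable and correct gloss on a phrase the paper does not explain.
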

The procedure is called  Poissonization of the Jacobi manifold $(M, \Lambda, E)$. For contact manifolds it is possible to check that this is actually a symplectification. Indeed, in such a case, one can define a   closed $2$-form $\omega$ on $M \times \mathbb{R}$ in terms of the  contact one-form $\vartheta$: $\omega=d\left( e^{\tau} \pi^* \vartheta\right)=e^{\tau}\left(d\tau \wedge \pi^* \theta+d\pi^*\theta \right)$, where $\pi: M \times \mathbb{R} \to M$ is the projection map. By using the defining properties of the contact form, it is possible to check that $\omega$ is non-degenerate, hence symplectic.

The Poissonization procedure  provides a simple recipe to obtain a Poisson bracket from a Jacobi structure and can be used to derive useful results for Jacobi manifolds. By using the  projection map $\pi$  it is possible to define Hamiltonian vector fields associated with the Jacobi structure ({see, for example,} \cite{Vaisman2002}) 
\begin{equation}
\xi_f \coloneqq  \pi_*( \xi^P_{e^{\tau}f})|_{\tau=0},
\end{equation}
where $\xi^P_{e^{\tau}f}$ is the Hamiltonian vector field associated with the Poisson bracket on $M \times \mathbb{R}$ and $\pi_*:T(M \times \mathbb{R}) \rightarrow TM$ denotes the push-forward of the projection map. This yields,   for any function $f\in \mathcal{F}(M)$ 
\begin{equation}\label{Hamvec}
\xi_f=\Lambda(df, \cdot)+f E.
\end{equation}
The map $f\rightarrow \xi_f$ is homomorphism of  Lie algebras, it being  $[\xi_f, \xi_g]=\xi_{\{f,g\}_J}$, where the bracket $[\cdot, \cdot]$ is the standard Lie bracket of vector fields. 

Theorem \ref{thpoissonization} may be invoked to obtain  field theories on Jacobi manifolds by relying on existing models on the 
overlying Poisson manifold. This is achieved by  the immersion $i: M\hookrightarrow M\times \R$ through the identification of $M$ with $M \times \{0\}$. It applies in particular to the Poisson sigma model and it is the approach followed in the first part of \cite{Bascone2021} (also see \cite{Chatzistavrakidis2020, Vancea:2020bwu}). However, as we will see, it  is not the choice we have made for the  present   paper. 

%%%%%%%%%%%%%%%%%%%%%%%%%%%%%%%%%%%%%%%%%%
\section{Jacobi sigma model}\label{secjacobisigmamodel}

{In this section,} we shall analyse the Jacobi sigma model, first   introduced in \cite{Bascone2021} (also see \cite{Chatzistavrakidis2020}) as a generalisation  of the Poisson sigma model. Although the defining action functional may be justified in terms of a Poissonization of the target   Jacobi manifold and further reduction of the correspondent Poisson sigma model living on $M\times \R$, it has been shown in \cite{Bascone2021} that an independent formulation can be given.  We shall adhere to the latter approach in this paper. Therefore the following coordinates-independent reformulation of the definition already given in \cite{Bascone2021} may be stated
\begin{defi}\label{defjacobiaction}
Let $\left(M, \Lambda, E \right)$ be a n-dimensional Jacobi manifold. The Jacobi sigma model with source space a two-dimensional manifold $\Sigma$ with boundary $\del \Sigma$ and target space $M$ is defined by the action functional 
\be
S[X,(\eta,\lambda)]= \int_\Sigma \langle\eta, ({ d} X)\rangle + \frac{1}{2}\langle\eta,(\Lambda\circ X)\eta\rangle + \lambda \wedge (E\circ X)\eta
\ee
with boundary condition  $\eta(u)v=0, u \in \del \Sigma, v\in T(\del \Sigma)$.
\end{defi}
The field  configurations are represented by $X, (\eta, \lambda)$ with   $X: \Sigma \to  M$ the base map and
$(\eta, \lambda ) \in \Omega^1(\Sigma,  X^*(J^1 M))$,
where $J^1 M=T^*M\oplus \R$ is the 1-jet bundle  of real functions on $M$.

Sections of the latter are isomorphic as a $C^\infty(M)$-module to the algebra of one-forms \cite{Vaisman2000} 
\be \Gamma_0(M):= \{e^\tau(\alpha+ f d\tau) | \alpha\in \Omega^1(M), f\in  C^\infty(M), \tau\in \R\} \subseteq \Omega^1(M\times \R) \label{gamma0}
\ee
 which is closed with respect to the Koszul bracket of the Poissonised manifold.
The map $\langle\;,\;\rangle$ establishes a pairing  between  differential forms on $\Sigma$ with values in the pull-back  $X^*(T^*M)$ and differential forms on $\Sigma$ with values in $X^*(TM)$. It is induced   by  the natural one between  $T^*M$ and $TM$ and yields in this case a two-form on $\Sigma$.  Then the action may be rewritten as (cfr. \cite{Bascone2021}) 
\begin{equation}\label{jacobiaction}
S(X, \eta, \lambda)=\int_{\Sigma} \left[\eta_i \wedge dX^i+\frac{1}{2}\Lambda^{ij}(X)\eta_i \wedge \eta_j-E^i(X) \eta_i \wedge \lambda \right].
\end{equation}
On comparing with the action of the Poisson sigma model \eqn{actionpoissonsigma} one  important difference  is the presence of a new auxiliary field, $\lambda$, which, loosely speaking, is  a one-form on the source manifold $\Sigma$ but a scalar on the Jacobi manifold. This is a consequence of the fact that the Jacobi bracket is expressed in terms of a bi-differential operator, not a bi-vector field. Therefore $\lambda$   is needed  in order to take into account the presence of the Reeb vector field $E$.

The variation of the action, together with the boundary condition for $\eta$ in Def. \ref{defjacobiaction}, gives the following equations of motion
\begin{equation}\label{eomjacobi1}
dX^i+\Lambda^{ij}\eta_j-E^i \lambda=0,
\end{equation}
\begin{equation}\label{eomjacobi2}
d\eta_i+\frac{1}{2}\partial_i \Lambda^{jk}\eta_j \wedge \eta_k-\partial_i E^j \eta_j \wedge \lambda=0,
\end{equation}
\begin{equation}\label{eomjacobi3}
E^i \eta_i=0.
\end{equation}
The boundary condition for $\eta$ ensures the vanishing of  boundary terms.
Consistency of the three yields another dynamical equation. In fact, on applying the exterior derivative to Eq.  \eqn{eomjacobi1} we obtain
\begin{equation}\label{interm}
\partial_k \Lambda^{ij} dX^k \wedge \eta_j+\Lambda^{ij}d\eta_j-\partial_k E^i dX^k \wedge \lambda-E^i d\lambda=0.
\end{equation}
By substituting Eqs. \eqn{eomjacobi1}-\eqn{eomjacobi3}  and by using the properties of a Jacobi structure, Eqs.~\eqn{jacobilambda},  we finally get 
\begin{equation}\label{eomjacobi4}
d\lambda = \frac{1}{2}\Lambda^{ij} \eta_i \wedge \eta_j.
\end{equation}

\subsection{Canonical formulation of the model}
In this section we will focus on the Hamiltonian formulation of the model, in close analogy with the procedure followed for the Poisson sigma model in Sec. \ref{Secpoissonsigmamodel}.  
To this, the source manifold is chosen to be  $\Sigma=\mathbb{R} \times [0,1]$, with  local coordinates $t\in \R$, $u\in [0,1]$.
Moreover, by explicitly indicating the time and space components,  the one-forms $dX, \eta$ and $\lambda$ shall be locally represented as  $dX=\dot{X}dt+X' du$, $\eta=\beta dt+ \zeta du$, 
$\lambda=\lambda_t dt+ \lambda_u du$, with $\lambda_t, \lambda_u$ scalar fields, while $\dot X, X'$ and $\beta, \zeta$ carrying and extra index on (the pull-back of)  the target manifold $M$.  Note that the boundary condition {in definition \ref{defjacobiaction}} results in $\beta_{\del\Sigma}=0$, just like for the Poisson sigma model, while there is no boundary condition for  $\lambda$ deriving from the variation of the action. We shall discuss  this  issue  later. 

With the notation chosen, the Lagrangian of the model acquires the form 
\begin{equation}\label{jacobilagrangian}
L=\int_I du \left[-\dot{X}^i \zeta_i+\beta_i\left(X^{'i}+\Lambda^{ij}\zeta_j -E^i \lambda_u\right)+\lambda_t\left( E^i \zeta_i \right) \right],
\end{equation}
with equations of motion
\beqa\label{expleoms}
\dot X^i&=& -\Lambda^{ij} \beta_j+E^i \lambda_t\nonumber\\
\dot \zeta_i &=& \beta^{'}_i-\partial_i \Lambda^{jk} \beta_j \zeta_k -\partial_i E^j \zeta_j \lambda_t+ \partial_i E^j \beta_j \lambda_u,
\eeqa
\beqa
\label{eulerlagrangeconstraints}
X'^i+\Lambda^{ij}\zeta_j-E^i \lambda_u=0 \nonumber\\
E^i \zeta_i=0 \nonumber\\
E^i \beta_i=0.
\eeqa
The evolutionary equations are, therefore, represented by  Eqs. \eqn{expleoms}, involving time derivatives, while Eqs.  \eqn{eulerlagrangeconstraints} represent constraints. In the following we perform a detailed analysis of the emergence and nature of constraints in the Hamiltonian approach.

\subsubsection{Dirac analysis of constraints}\label{dirco}
From the Lagrangian \eqn{jacobilagrangian} the Hamiltonian is seen to be 
\be\label{Hamiltonian}
H_0= - \int_I du \, \beta_i\left(X^{'i}+\Lambda^{ij}\zeta_j -E^i \lambda_u\right)+\lambda_t\left( E^i \zeta_i \right),
\ee
with $\pi_i= \delta L/\delta \dot X^i=-\zeta_i$ the conjugate momenta for the field $X^i$, while the conjugate momenta of all other fields are zero. The theory is therefore constrained. We shall perform the analysis \`a la Dirac, referring to standard textbooks for a detailed description of the procedure. \footnote{Shortly, we recall that primary constraints are those which emerge from the Lagrangian, without using the equations of motion. They identify a submanifold of the original carrier  space of the dynamics, $\mathcal{C} _1 \subset \mathcal{C}_0$. Secondary constraints are all subsequent constraints, obtained by the request that primary constraints be preserved along the motion. They  further constrain the motion to some $\mathcal{C} _2 \subset \mathcal{C}_1$. The process is iterated by imposing conservation of new constraints (tertiary, $\cdots$, n-ary, $\cdots$)  at each step, until the true manifold of the motion, $\mathcal{C}_n\subset \mathcal{C}_{n-1}\subset\cdots \subset\mathcal{C}_0 $, is found. The term "secondary constraints" is then used for all, except for primary  constraints.  

Dirac classification of constraints is yet another one, which is specific of the Hamiltonian setting \cite{Dirac1964}. Here the carrier space of the dynamics is phase space, endowed with a Poisson bracket. At each step of the reduction from the unconstrained phase space $ \mathcal{C}_0$, the so called  na\"ive Hamiltonian $H_0$ is replaced by a new one, say $H_i=H_0 + a_\mu \chi_\mu+ b_\mu \mathcal{G}_\mu$, with $\{\chi_\mu\}$ the primary constraints, and $\{\mathcal{G}_\mu\}$ the secondary constraints which have emerged up to the step $i$. The parameters $a_\mu, b_\mu$ are  also referred to as Lagrange multipliers. The process ends when all constraints, say $\psi_\mu$, are conserved, namely $\dot\psi_\mu= \{\psi_\mu, H_n\}\simeq 0$ on the constrained manifold $\mathcal{C}_n$.   On considering the Poisson algebra of all constraints, first class (primary and secondary) are those which close a  subalgebra, i.e. $\{\psi_\mu, \psi_\nu\}= f_{\mu\nu}^\kappa \psi_k \simeq 0$, whereas second class constraints obey $\{\psi_\mu, \psi_\nu\}= c_{\mu\nu} $, with $c_{\mu\nu}$ a non-degenerate matrix (second class constraints are therefore in even number). Because of that,  their Lagrange multipliers, say $d_\mu$, may be completely determined according to $d_\mu= -c_{\mu\nu} \{\psi_\nu, H_0\}$ as opposed to first class ones, which are left undetermined, hence,   give rise to gauge ambiguities. }
\be\label{Constraints}
\pi_{\beta_i}=0,\;\; \pi_{\lambda_u}= 0, \;\; {\pi_{\lambda_t}}= 0
\ee
which have to be added to the Hamiltonian $H_0$.
 The unconstrained phase space of the model may be identified as the infinite-dimensional manifold  $T^*P(M\times \R^{m}\times \R\times \R)$ with $P(N)$ denoting the space of maps from the source space $I=[0,1]$ to some target $N$. The configuration fields will be $X^i: I\rightarrow M, \, \beta_i: I\rightarrow \R^m, i= 1\dots m$, and $\lambda_t, \lambda_u: I\rightarrow \R$. It is possible to read off the non-zero Poisson brackets from the first order action, which  yields
\be\label{canon}
\{\pi_i(u), X^j(v)\}=\delta_i^j \delta(u-v)
\ee
to which we have to add those related with the extended phase space
\beqa
\{\pi_{\beta_i}(u), \beta_j(v)\}&=&\delta_i^j \delta(u-v)\\
\{\pi_{\lambda_t}(u), \lambda_t(v)\}&=& \delta(u-v)\\
\{\pi_{\lambda_u}(u), \lambda_u(v)\}&=& \delta(u-v).
\eeqa
By imposing that primary constraints be preserved along the motion, $m+2$ new constraints are obtained 
\be\label{constraints}
\begin{array}{lllll}
 \dot \pi_{\beta_i}&=& X'^i+ \Lambda^{ij}\zeta_j -E^i \lambda_u&:=&\mathcal{G}_{\beta_i}\\
\dot \pi_{\lambda_t}&=& E^i\zeta_i&:=&\mathcal{G}_{\lambda_t}\\
\dot \pi_{\lambda_u}&=& E^i\beta_i&:= &\mathcal{G}_{\lambda_u}. 
 \end{array}
 \ee
 Hence, the initial Hamiltonian $H_0$ is itself a sum of constraints
 \be
 H_0= -\int du\,  \left[ \beta_i \mathcal{G}_{\beta_i} + \lambda_t \mathcal{G}_{\lambda_t} \right]
 \ee
 Let us compute their Poisson algebra. For  secondary constraints we find
 \beqa
 \{\mathcal{G}_{\beta_i}(u), \mathcal{G}_{\beta_j}(v)\}&=& -\Lambda^{il}\frac{\del}{\del X^l(u)}\mathcal{G}_{\beta_j}(v) +\Lambda^{jl}\frac{\del}{\del X^l(v)}\mathcal{G}_{\beta_i}(u)  \label{sec1}\\
  \{\mathcal{G}_{\beta_i}(u), \mathcal{G}_{\lambda_u}(v)\}&=& -\Lambda^{il}\frac{\del}{\del X^l(u)} \mathcal{G}_{\lambda_u}(v)\label{sec2}\\
 \{\mathcal{G}_{\beta_i}(u), \mathcal{G}_{\lambda_t}(v)\}&=& -\Lambda^{il}\frac{\del}{\del X^l(u)}\mathcal{G}_{\lambda_t}(v)+ E^l \frac{\del}{\del X^l(v)}\mathcal{G}_{\beta_i}(u) \label{sec3}\\
  \{\mathcal{G}_{\lambda_t}(u),\mathcal{G}_{\lambda_u}(v)\}&=& -E^l \frac{\del}{\del X^l(u)}\mathcal{G}_{\lambda_u}(v)
  \label{sec4}\eeqa
 Before proceeding further, we assume, without loss of generality, that  a basis of vector fields on the target manifold $M$ has been chosen such that the Reeb vector field has non-zero component only along one of the basis elements, say $E^i= {\mathcal E}\delta^{im}$ and  we shall indicate with $a=1,\dots m-1$ the remaining directions. Thus we compute the remaining brackets, which yield
 \beqa\label{alco}
 \{\pi_{\beta_i}(u), \mathcal{G}_{\lambda_u}(v)\}&=& {\mathcal E}\,\delta^{im}  \delta(u-v) \label{alco1}\\
 \{\pi_{\lambda_u}(u), \mathcal{G}_{\beta_i}(v)\}&=&- {\mathcal E} \,\delta^{im}  \delta(u-v) \label{alco2})
 \eeqa
 with all other brackets strongly zero. By repeated use of Eqs.   \eqn{jacobiidentitygen}, \eqn{liederivelambda} in the chosen parameterization for the Reeb vector field as $E^i= \mathcal{E} \delta^{im}$,
  a tedious but straightforward calculation  gives an explicit expression for the Poisson brackets   \eqn{sec1}-\eqn{sec4}, which read 
 \beqa
  \{\mathcal{G}_{\beta_a}(u), \mathcal{G}_{\beta_b}(v)\}&=& \left[\mathcal{G}_{\beta_l}\del_l \Lambda^{ba}-
   \mathcal{G}_{\lambda_t} \Lambda^{ba}\right] \simeq 0  \label{secc1}\\
    \{\mathcal{G}_{\beta_a}(u), \mathcal{G}_{\lambda_t}(v)\}&=& 0 \label{secc2}\\
    \{\mathcal{G}_{\beta_a}(u), \mathcal{G}_{\beta_m}(v)\}&=& \left[\mathcal{G}_{\beta_l}\del_l  \Lambda^{ma}-\mathcal{G}_{\lambda_t} \Lambda^{ma}-\mathcal{E} \Lambda^{ak}\zeta_k\right] \simeq  - \mathcal{E} \Lambda^{ak}\zeta_k \label{secc3}\\
  \{\mathcal{G}_{\beta_m}(u), \mathcal{G}_{\lambda_t}(v)\}&=& \mathcal{G}_{\beta_l}\del_l {\mathcal E} - \mathcal{E}\delta'(u-v)\simeq - \mathcal{E}\delta'(u-v) \label{secc4}\\
  \{\mathcal{G}_{\beta_a}(u), \mathcal{G}_{\lambda_u}(v)\}&=&-\mathcal{G}_{\lambda_u}\del_m \Lambda^{am}\simeq 0\label{secc5}\\
    \{\mathcal{G}_{\beta_m}(u), \mathcal{G}_{\lambda_u}(v)\}&=&-  \beta_m \Lambda^{ml} \del_l  \mathcal{E}\label{secc6}\\
  \{\mathcal{G}_{\lambda_t}(u),\mathcal{G}_{\lambda_u}(v)\}&=& -\mathcal{E}\beta_m\del_m \mathcal{E} = - \mathcal{G}_{\lambda_u} \del_m\mathcal{E} \simeq 0.
  \label{secc7}
  \eeqa
The chosen parametrisation for the Reeb vector field is particularly useful because  it considerably simplifies the classification of constraints as first or second class. By inspecting the rank of the matrix of Poisson brackets,   it is easy to verify that the latter is always equal to four. {Therefore, we} may conclude that  four out of $2m+4$ constraints are second class, i.e.,
\be
\begin{array}{cc}
\pi_{\lambda_u}&\pi_{\beta_m}\\
\mathcal{G}_{\lambda_u}&\mathcal{G}_{\beta_m}
\end{array}.
\ee
By evaluating the conservation of constraints with respect to the total Hamiltonian 
 \be
 H_1= H_0+ \int du \, [a_i \pi_{\beta_i}+ a_t\pi_{\lambda_t}+ a_u\pi_{\lambda_u}]
\ee
 we may verify that no new constraints arise, but some of the Lagrange multipliers get fixed, namely 
\be
a_{m}= \beta_m=0, \;\;\; a_u= \beta_a \Lambda^{ak}\zeta_k + \del_u \lambda_t.
\ee
\footnote{As for the Lagrange multiplier $a_u$, we notice that its value agrees with the equation of motion for $\lambda_u$ which has been derived in the Lagrangian formalism, \eqn{eomjacobi4}.} The remaining $2m$ constraints, 
\be
\begin{array}{ccc}
\pi_{\beta_a},&\mathcal{G}_{\beta_a},& a=1, \dots,m-1\\
\pi_{\lambda_t},&\mathcal{G}_{\lambda_t}&
\end{array}
\ee
are first class, thus generating   gauge transformations, with generating functional given by the linear combination
\be\label{Kgen}
K(\beta_a, \lambda_t, a_t,a_{\beta_a})= \int du \, \lambda_t  \mathcal{G}_{\lambda_t} +\beta_a \mathcal{G}_{\beta_a} + a_t \pi_{\lambda_t}+a_{\beta_a} \pi_{\beta_a},\;\;\; a=1,\dots,m-1
\ee
and  $\beta_a, \lambda_t, a_t, a_{\beta_a}$ {are gauge parameters}\footnote{First class constraints have zero Poisson brackets with the total Hamiltonian, with undetermined Lagrange multipliers. Hence, they generate canonical symmetries, that is to say, gauge transformations. One main difference with respect to the Poisson sigma model is that for the latter the whole Hamiltonian is a first class constraint, hence being itself the generating function of gauge transformations. Here instead, the Hamiltonian contains second class constraints as well, which have to be subtracted in order to get the gauge generators.}. 

In order to compute the algebra of gauge generators, $\{K(\beta,\lambda_t, a_t, a_{\beta_a}), K(\tilde\beta,\tilde\lambda_t, \tilde a_t, \tilde a_{\beta_a})\}$,  we notice firstly  that  primary constraints  in  \eqn{Kgen} may be ignored, because their Poisson brackets are strongly zero. Secondly, it is evident from Eq. \eqn{secc1} that, similarly to the Poisson sigma model,  the algebra will only close on-shell.  Therefore, in order to obtain a closed algebra off-shell we  allow for the relevant gauge parameters to be functions of the fields. More precisely, given $(\beta_a, \lambda_t) \in C(I\rightarrow X^*( T^*M\oplus \R))$ we allow for  $\beta_a= \beta_a(u, X(u)), \lambda_t= \lambda_t(u, X(u))$. 
Thus, we compute
\beqa
\{K(\beta, \lambda_t), K(\tilde\beta, \tilde\lambda_t)\}&=&\int du du'  \left[ \{(\beta_a\mathcal{G}_a)(u), (\tilde\beta_b\mathcal{G}_b)(u')\}+ \{(\beta_a \mathcal{G}_a)(u), (\tilde\lambda_t \mathcal{G}_t)(u')\} \right. \nonumber\\
&+& \left. \{(\lambda_t \mathcal{G}_t)(u), (\tilde\beta_b \mathcal{G}_b)(u')\}+\{(\lambda_t \mathcal{G}_t)(u), (\tilde\lambda_t \mathcal{G}_t)(u')\} \right] . 
\eeqa
On using \eqn{canon} , where $\zeta_i= -\pi_i$ 
we find
\beqa
\{(\beta_a\mathcal{G}_a)(u), (\tilde\beta_b\mathcal{G}_b)(u')\}&=& \left[\mathcal{G}_c \left(\beta_a\tilde\beta_b \del_c \Lambda^{ba}-\beta_a \Lambda^{aj}\del_j\tilde\beta_c+ \tilde\beta_a \Lambda^{aj}\del_j\beta_c \right)\right.\nn\\
& -& \left. \mathcal{G}_t \beta_a\tilde\beta_b \Lambda^{ba}\right]\delta(u-u')
\\
\{(\beta_a \mathcal{G}_a)(u), (\tilde\lambda_t \mathcal{G}_t)(u')\}&=&\left(\mathcal{G}_a  \mathcal{E}\tilde\lambda_t \del_m \beta_a-\mathcal{G}_t\beta_a\Lambda^{aj}\del_j \tilde\lambda_t \right)\delta(u-u')\\
\{(\lambda_t \mathcal{G}_t)(u), (\tilde\beta_b \mathcal{G}_b)(u')\}&=&\left( \mathcal{G}_t \tilde\beta_b\Lambda^{bj}\del_j \lambda_t - \mathcal{G}_b \mathcal{E}\lambda_t \del_m \tilde\beta_b \right)\delta(u-u')\\
\{(\lambda_t \mathcal{G}_t)(u), (\tilde\lambda_t \mathcal{G}_t)(u')\}&=& \mathcal{G}_t\mathcal{E}(\tilde\lambda_t\del_m \lambda_t  - \lambda_t\del_m \tilde \lambda_t ) \delta(u-u').
\eeqa
This yields
\beqa\label{poibKK}
&&\{K(\beta, \lambda_t), K(\tilde\beta, \tilde\lambda_t)\}=\int du du' \left[\mathcal{G}_c\Bigl(\beta_a\tilde\beta_b \del_c \Lambda^{ba} +\Lambda^{aj}(\tilde\beta_a \del_j\beta_c-\beta_a \del_j\tilde\beta_c)  \Bigr.\right.\nonumber \\
&&\;\;\;\;\;\;\left.\Bigl.+\mathcal{E}\left(\tilde\lambda_t\del_m\beta_c-\lambda_t\del_m\tilde\beta_c\right)\Bigr)\right. \nn\\
&&\;\;\;\;\;\; \left.+\mathcal{G}_t \left( \beta_a\tilde\beta_b  \Lambda^{ab}+\Lambda^{aj}(\tilde \beta_a \del_j\lambda_t -\beta_a \del_j\tilde\lambda_t )+ \mathcal{E}\left(\tilde\lambda_t\del_m \lambda_t -\tilde\lambda_t\del_m \tilde\lambda_t\right)\right) \right]
\eeqa
We now observe that a generalisation of the Koszul bracket \eqn{Kos} is available for Jacobi manifolds, which endows the set of   sections of  the 1-jet bundle $ J^1M$  with a Lie algebra structure \cite{Kerbrat1993,Vaisman2000}.  
Given $(\alpha, f), (\beta, g)$ sections of $ J^1M$, namely $\alpha,\beta\in \Omega^1(M), f,g\in C(M)$, the bracket reads\footnote{Vaisman shows in \cite{Vaisman2000} that this is nothing but the Koszul bracket \eqn{Kos} defined for the associated ``Poissonized" manifold $(M\times\R, P)$, with respect to which the algebra of sections of  $J^1M$ is closed.}
\beqa\label{vaibra}
[(\alpha, f), (\beta, g)]&= &\Big( \bigl ( L_{\sharp_\Lambda \alpha } \beta-L_{\sharp_\Lambda \beta} \alpha  - d(\Lambda(\alpha,\beta)+ f L_E \beta -g L_E \alpha-\alpha(E) \beta+\beta(E)\alpha)\bigr) ,\Big.\nonumber \\
&& \Big. 
 \bigl(\{f,g\}_J -\Lambda(df -\alpha, dg -\beta\bigr)\Bigr)
\eeqa
where $\sharp_\Lambda \alpha$ denotes the vector field obtained by contracting the {bi-vector} field $\Lambda$ with the one-form $\alpha$; in local coordinates it reads: $\sharp_\Lambda \alpha=\alpha_i \Lambda^{ij}\del_j $. The latter satisfies Jacobi identity, provided the manifold is a Jacobi manifold, with $\{f,g\}_J$  the Jacobi bracket. 
 Analogously to the Poisson sigma model,   Eq. \eqn{vaibra} may be extended to   maps from the interval $I$ to sections of the 1-jet bundle $(\alpha, f):I \rightarrow  \Gamma(J^1M)$, with the property $\alpha(0)= \alpha(1)=0$, according to
 \be\label{Kosext}
 [(\alpha, f), (\beta, g)](u)=  [(\alpha, f)(u), (\beta, g)(u)].
 \ee
On computing the bracket \eqn{vaibra} for $(\beta, \lambda_t), (\tilde\beta, \tilde\lambda_t)$ a lengthy but straightforward calculation yields
\be[(\beta, \lambda_t), (\tilde\beta, \tilde\lambda_t)]=(\underline{\boldsymbol\beta}, \underline{\boldsymbol\lambda}_t)
\ee
with 
\beqa
\underline{\boldsymbol\beta}&=& \Bigl(\Lambda^{ij}(\beta_i \del_j\tilde\beta_k -\tilde\beta_i \del_j\beta_k)+ \tilde\beta_i\beta_j\del_k \Lambda^{ij}+\mathcal{E}(\lambda_t\del_m\tilde\beta_k-\tilde\lambda_t\del_m\beta_k)\nn\\
&+& \mathcal{E}(\tilde\beta_m \beta_k -\beta_m\tilde\beta_k)+(\lambda_t\tilde\beta_m - \tilde\lambda_t \beta_m)\del_k{\mathcal E} \Bigr)dX^k \label{Gamma}\\
\underline{\boldsymbol\lambda}_t&=& \Lambda^{ij}(\beta_i \del_j\tilde\lambda_t - \tilde\beta_i\del_j\lambda_t -\beta_i\tilde\beta_j)+ \mathcal{E}(\lambda_t\del_m\tilde\lambda_t-\tilde\lambda_t\del_m\lambda_t)\label{Gammat}
\eeqa
 Therefore, by  taking into account the second class constraints, which enforce $\beta_m=\tilde\beta_m=0$,   the RHS of the Poisson bracket \eqn{poibKK} may be stated in terms of \eqn{Gamma},\eqn{Gammat} to give
\be\label{KKrel}
\{K_{(\beta, \lambda_t)}, K_{(\tilde\beta, \tilde \lambda_t)}\}=- K_{[(\beta,\lambda_t), (\tilde\beta, \tilde\lambda_t)]}
\ee
Notice that, for $\beta_a =\del_a \lambda_t$ and analogous expression for $\tilde\beta_a$, the latter further reduces to
\be\label{poialge}
\{K_{(\beta, \lambda_t)}, K_{(\tilde\beta, \tilde \lambda_t)}\}= -K_{(d\{\lambda_t,\tilde\lambda_t\}_J, \{\lambda_t,\tilde\lambda_t\}_J)}
\ee
which is the particular case considered in \cite{Bascone2020}. The mapping $f\rightarrow e^\tau(df + f d\tau)$ with $f\in C^\infty(M)$ is a Lie algebra homomorphism from the Jacobi algebra of $M$ to $\Gamma_0(M)$ defined in \eqn{gamma0}. 

To summarise, the model exhibits first class constraints, which generate gauge transformations. Differently from the Poisson sigma model, second class constraints are present, which have to be dealt with, before analysing  the algebra of gauge generators. Thanks to the  bracket \eqn{vaibra} the  map
$
(\beta,\lambda_t) \rightarrow K(\beta,\lambda_t) 
$
is a Lie algebra homomorphism. Moreover, because of the homomorphism stated at the end of last paragraph, time-space diffeomorphisms may be  explicitly related with the Hamiltonian vector fields associated with $\lambda_t$ (resp. $\lambda_u$) through the Jacobi bracket (see \cite{Bascone2021} for details). 

It is to be noticed that, because of the presence of second class constraints, the  Hamiltonian vector fields generating infinitesimal gauge transformations are not directly associated with the Hamiltonian, but  rather with the functional $K_{(\beta, \lambda_t)}$. They shall  be explicitly worked out in the forthcoming section.

\subsubsection{The reduced phase space}\label{redph}

The reduced phase space of the model was  proven to be finite-dimensional in \cite{Bascone2021} with the simplifying assumption that $\lambda_u$ be zero. Moreover, a detailed analysis of constraints was not performed and  the  structure of the constrained manifold  was not completely clarified. Therefore, we shall repeat in what follows the proof of finite dimensionality of the reduced phase space in full generality.

  Since the model is gauge invariant under the action of the gauge transformations generated by the flows of the Hamiltonian vector field associated with the functional $K$, we can define the reduced phase space as the quotient space $\mathcal{C}/H$, where $H$ is the gauge group and $\mathcal{C}$ is  the constraint manifold. According to  Sec.  \ref{dirco} the former is an infinite-dimensional manifold, which after the imposition of all constraints results to be labelled by $2m$  fields. We  choose to parametrise the manifold with  $ X^i, \zeta_a, \lambda_u$.
The quotient manifold  $\mathcal{C}/H$ is finite-dimensional. Indeed the following theorem holds.
\begin{thm}\label{mainthm}
Let $(X^i, \zeta_a, \lambda_u) \in \mathcal{C}$. The subspace of $T_{(X^i, \zeta_a, \lambda_u)}\mathcal{C}$ spanned by the Hamiltonian vector fields $\xi_{\beta, \lambda_t}$ is a closed subspace of codimension $2 \text{dim}(M)-2$.
\end{thm}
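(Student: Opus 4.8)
The plan is to mimic the argument for the Poisson sigma model in \cite{Cattaneo2001}, adapting it to the presence of the extra field $\lambda$ and of the second class constraints. The strategy is to show that the tangent space $T_{(X,\zeta_a,\lambda_u)}\mathcal{C}$ to the constraint manifold, modulo the subspace generated by the Hamiltonian vector fields $\xi_{(\beta,\lambda_t)}$, is finite-dimensional of dimension $2\dim(M)-2$; equivalently, that the gauge orbit through a point has codimension $2\dim(M)-2$ inside $\mathcal{C}$. First I would write down explicitly the Hamiltonian vector fields $\xi_{(\beta,\lambda_t)}=\{K_{(\beta,\lambda_t)},\cdot\}$ acting on the fields $(X^i,\zeta_a,\lambda_u)$ that parametrise $\mathcal{C}$, using the canonical brackets \eqn{canon} and the generating functional \eqn{Kgen}; these are first-order linear differential operators in $u$ with coefficients built from $\Lambda$, $E$ and the fields. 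Then I would linearise the constraint equations \eqn{constraints} around a solution to obtain the defining equations of $T_{(X,\zeta_a,\lambda_u)}\mathcal{C}$ as a space of field variations $(\delta X^i,\delta\zeta_a,\delta\lambda_u)$ satisfying a linear system of ODEs in $u$ with boundary conditions inherited from $\beta|_{\partial I}=0$.

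The core of the argument is a dimension count via an index/ODE argument. One views the linearised constraints as a linear first-order ODE system on $[0,1]$ for the perturbations: a solution is determined by its value at $u=0$ together with the constraint data, so the space of solutions modulo gauge is governed by the holonomy of this system. Concretely, I expect that after quotienting by the image of $\xi_{(\beta,\lambda_t)}$ — where $(\beta_a,\lambda_t)$ range over maps vanishing at the endpoints — what survives is parametrised by boundary data at $u=0$ and $u=1$ in the directions transverse to the Reeb vector field $E$ and to $\lambda$. In the chosen parametrisation $E^i=\mathcal{E}\delta^{im}$, the field $X^m$ and its conjugate are effectively eliminated by the second class pair $(\pi_{\lambda_u},\pi_{\beta_m},\mathcal{G}_{\lambda_u},\mathcal{G}_{\beta_m})$, which removes two dimensions relative to the Poisson case. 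This is where the shift from $2\dim(M)$ (Poisson) to $2\dim(M)-2$ (Jacobi) comes from: the first class system has $2(m-1)+2=2m$ constraints but the gauge group it generates is "one size larger" than the naive count because of the $\lambda_t$ sector, or, dually, the Poissonisation $M\times\R$ has dimension $m+1$ giving $2(m+1)$ for the overlying model, from which the projection to $M\times\{0\}$ and the elimination of the $\tau$-direction plus its conjugate removes $4$, leaving $2m-2$.

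I would make this precise by the following steps: (i) show $\xi_{(\beta,\lambda_t)}$ span a subspace of $T\mathcal{C}$ — this uses that the $K_{(\beta,\lambda_t)}$ are first class, already established; (ii) show this subspace is closed, i.e. its image is not dense but genuinely a (topologically closed) subspace — here one argues as in \cite{Cattaneo2001} that the relevant operator has closed range because it is (up to lower order terms) the derivative operator $\partial_u$ on maps vanishing at the endpoints, whose cokernel is the finite-dimensional space of boundary values; (iii) compute the codimension by identifying the cokernel with the evaluation of the transverse field components $(\delta X^a,\delta\zeta_a)$, $a=1,\dots,m-1$, at the two endpoints, together with a careful accounting of the $\lambda_u$ and constraint directions, giving $2(m-1)=2\dim(M)-2$. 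The main obstacle I anticipate is step (iii): carefully tracking which boundary data are genuinely free after imposing all $2m$ constraints and quotienting by gauge, in particular making sure that the $\lambda_u$-direction and the constraint $\mathcal{G}_{\beta_a}$ conspire so that exactly two dimensions are lost compared to the Poisson case, rather than one or three. A clean way to organise this is to push the computation through the Poissonisation $M\times\R$, invoke the known Poisson-sigma-model result (dimension $2(m+1)$ of the reduced phase space of the overlying model) and then show that restriction to $M\times\{0\}$ plus the projection $\pi_*$ imposes four independent conditions; but verifying that these four conditions are independent and exhaust the discrepancy is itself the delicate point.
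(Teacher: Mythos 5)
Your overall strategy (write the gauge vector fields generated by $K_{(\beta,\lambda_t)}$, linearise the constraints, show the map $(\beta,\lambda_t)\mapsto\xi_{\beta,\lambda_t}$ has closed image and compute its codimension) is the same as the paper's, but the proposal stops exactly at the point where the actual proof lives, and the heuristics you offer for the crucial count are not the right mechanism. The paper characterises the image concretely: introducing the path-ordered exponential $V$ of \eqn{defvmatrix}, a tangent vector $(\tilde X,\tilde\zeta)$ solving the linearised constraint \eqn{linearizedconstr} lies in the image of $\xi$ if and only if the conditions \eqn{conditionsproofdim} hold, namely $\tilde X^i(0)+\mathcal{E}\delta^{im}\lambda_t(0)=0$ together with the $m-1$ integral conditions $\int_I du\, V(u)_a^b\tilde\zeta_b(u)=0$; the converse inclusion is verified by an explicit choice of $\beta_a$ and a computation using \eqn{jacobiidentitygen} and \eqn{liederivelambda}. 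The delicate point you flag but do not resolve is then settled by a specific observation: the apparent count is $2m-1$, and it drops to $2m-2$ because the $m$-th component of the first condition reads $\tilde X^m(0)=-\lambda_t(0)$, which is \emph{not} gauge-invariant since $\lambda_t$ has no boundary condition and $\lambda_t(0)$ is free. Your proposed mechanism --- that the second-class pair eliminates $X^m$ and its conjugate, removing two dimensions --- is not what happens: all $m$ components of $X$ survive as coordinates on $\mathcal{C}$ (the paper parametrises $\mathcal{C}$ by $X^i,\zeta_a,\lambda_u$), and the two missing conditions arise one from the restriction of the $\tilde\zeta$-conditions to the indices $a=1,\dots,m-1$ and one from the non-invariance of the $\tilde X^m(0)$ condition just described.

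Your suggested shortcut through Poissonisation is also problematic for a reason the paper itself points out: the reduced model inherited from the Poisson sigma model on $M\times\R$ carries the extra boundary condition $\lambda_t|_{\partial I}=0$, which the Jacobi sigma model does not impose, and it is precisely the freedom of $\lambda_t(0)$ that is responsible for the dimension being $2m-2$ rather than $2m-1$. So invoking the known $2(m+1)$ result for the overlying Poisson model and ``subtracting four'' cannot be made to work without confronting exactly the boundary-condition discrepancy you would be trying to avoid. In short, the skeleton of the argument is right, but the injectivity argument, the explicit characterisation of the image via $V$, the converse (surjectivity onto the subspace cut by \eqn{conditionsproofdim}) using the Jacobi identities, and above all the gauge-invariance analysis of the boundary conditions that yields $2\,\mathrm{dim}(M)-2$ are all missing, and the two mechanisms you propose in their place do not close the gap.
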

\begin{proof}
Let us consider the subspace $\mathcal{S}_{(X^i, \zeta_a, \lambda_u)}$ of $T_{(X^i,\zeta_a, \lambda_u)}\mathcal{C}$ spanned by the Hamiltonian vector fields $\xi_{\beta, \lambda_t}$, associated with the functional $K(\beta, \lambda_t)$, generating infinitesimal gauge transformations. 
The map 
$(\beta, \lambda_t) \rightarrow \xi_{\beta, \lambda_t}$, explicitly given by 
\beqa
\delta_{\xi_{K} }X^i & \coloneqq& \{K(\beta, \lambda_t), X^i \}=\Lambda^{ia} \beta_a-\mathcal{E}\delta_m^i\lambda_t \label{trasf1}\\
\delta_{\xi_{K} } \zeta_a &\coloneqq& \{K(\beta, \lambda_t), \zeta_i \}=-(\beta_a)'+\beta_b \partial_a \Lambda^{bk}  \zeta_k+ \lambda_t  \zeta_m  \partial_a \mathcal{E}\label{trasf2}\\
\delta_{\xi_{K} } \zeta_m &\coloneqq& \{K(\beta, \lambda_t), \zeta_m \}=\beta_b\partial_m \Lambda^{bk}  \zeta_k+ \lambda_t \zeta_m  \partial_m {\mathcal E}  \label{trasf3}\\
\delta_{\xi_{K} } \lambda_u &\coloneqq& \{K(\beta, \lambda_t), \lambda_u \}=0 \label{trasf4}
\eeqa
 is linear. However, on the constraint manifold $\mathcal{C}$, the last terms in the r.h.s. of \eqn{trasf2} and \eqn{trasf3} vanish because $\zeta_m=0$, moreover, $\partial_m \Lambda^{bk}  \zeta_k = \partial_m \Lambda^{bc }  \zeta_c$ which  {is zero because of eq.} \eqn{jacobilambda}. Therefore, the  non-zero components of the map on the constraint manifold  are given by
\beqa
\xi_1^i 
&=&\Lambda^{ia} \beta_a-\mathcal{E}\delta_m^i\lambda_t  \label{hamvectcomp1}\\ 
 \xi_{2,a} 
 &=&-(\beta_a)'+\beta_b \partial_a \Lambda^{bk}  \zeta_k  \label{hamvectcomp2}
\eeqa
The  kernel of this linear map is empty, showing  that the map is injective. To this, we have to impose that Eqs. \eqn{hamvectcomp1}, \eqn{hamvectcomp2}  be zero. 
The second condition yields a homogeneous linear first order ODE with initial condition $\beta(0)=0$, {hence,} $\beta$ vanishes identically. The first one is instead an algebraic relation for which, by using the solution $\beta=0$ we have $\mathcal{E}\delta_m^i \lambda_t=0$ and since the Reeb vector field is nowhere vanishing it has to be $\lambda_t=0$. Hence, the map is injective.

{Let us, therefore,}  consider the image space.  The tangent vector $(\tilde{X}^i, \tilde{\zeta}_a), $ to a point $(X^i, \zeta_a, \lambda_u) \in \mathcal{C}$ is the solution of the linearised constraint equations
\begin{equation}\label{linearizedconstr}
\tilde{X}'^i+\left({A_j}^i -\partial_j \mathcal{E}\delta^{im} \lambda_u\right)\tilde{X}^j+\Lambda^{ib}\tilde{\zeta}_b =0.
\end{equation}
where we defined ${A_j}^i=\partial_j \Lambda^{ik}\zeta_k$. The tangent field has  no component $\tilde \lambda_u$ because of the constraint $\mathcal{G}_{\beta_m}$. 
If $(\tilde{X}, \tilde{\zeta})$ is an Hamiltonian vector field, and thus it is in the image of $\xi$, then it has to be 
\begin{equation}
\tilde{X}^i=\Lambda^{ib} \beta_b-\mathcal{E} \delta^{im} \lambda_t,
\end{equation} 
\be\label{proofeq1}
\tilde{\zeta}_a=-(\beta_a)'+{A_a}^b \beta_b.
\ee
The former have to hold at each $u$, which implies in particular $(\tilde{X}, \tilde{\zeta})$ is in the image of $\xi$ if 
\begin{equation*}
\tilde{X}^i(0)+\mathcal{E} (X(0))\delta^{im}\lambda_t(0)=0.
\end{equation*} 
If we introduce the matrix $V=\hat{P}\exp[-\int A\, du ]$ as the path-ordered exponential of $A$, i.e. the solution of the differential equation
\begin{equation}\label{defvmatrix}
\begin{cases}
(V_i^j)'=-V_i^k(u) {A_k}^j(u) \\
V_i^j(0)=\delta_i^j,
\end{cases}
\end{equation}
then Eq. (\ref{proofeq1}) can be rewritten in the form
\begin{equation}
\tilde{\zeta}_a (u)=-(V^{-1}(u))_a^c \, \partial_u[V(u) ^b_c\beta_b (u)].
\end{equation}
From this equation we can define the $m-1$ functions
\begin{equation}
p_a(u) \coloneqq \int_0^u dv V(v)^b_a \tilde{\zeta}_b(v)=- \int_0^u \partial_v[V(v) ^b_a\beta_b (v)],
\end{equation}
from which it follows that 
\begin{equation*}
\int_I du \, V(u)_a^b \tilde{\zeta_b}(u)=0.
\end{equation*}
Hence, we conclude that if $(\tilde{X}, \tilde{\zeta})$ is in the image of $\xi$, then we have 
\begin{equation}\label{conditionsproofdim}
\tilde{X}^i(0)+\mathcal{E}\delta^{im} (X(0))\lambda_t(0)=0, \quad \ \quad  \int_I du \,  V(u)_a^b \tilde{\zeta_b}(u)=0.
\end{equation} 
Now, it is important to notice that these conditions yield   $2m-2$ invariants and not $2m-1$ as it appears. Indeed, in the chosen parametrisation for the Reeb vector field, the first equation in  \eqn{conditionsproofdim} amounts to
\begin{equation}
\tilde{X}^a(0)=0, \quad \tilde{X}^m(0)=-\lambda_t(0).
\end{equation}
However, the second relation is not gauge invariant and  {does not} fix the $m-th$ component of $\tilde X$,  $\lambda_t(0)$ not being fixed to assume any   particular value. Therefore, the first of Eqs. \eqn{conditionsproofdim} yields $m-1$ invariant conditions. The final count of invariant conditions is then $2m-2$.

Viceversa, if we now consider $(\tilde{X}, \tilde{\zeta})$ as a tangent vector at the point $(X, \zeta, \lambda_u) \in \mathcal{C}$ satisfying  Eq. (\ref{conditionsproofdim}), then we show that this tangent vector is  Hamiltonian if we choose  $\beta_a=-(V^{-1})^b_a \, p_b=-(V^{-1})^b_a \int_0^u dv \, V(v)^{c}_b \tilde{\zeta}_{c}(v)$. To verify the statement, let us  define the vector field 
\begin{equation}\label{defYfield}
Y^i(u)=\Lambda^{ib}(u) \beta_b (u) - \mathcal{E}(u) \delta^{im} \lambda_t(u),
\end{equation}
satisfying the boundary condition  $Y^i(0)=-\mathcal{E}(0)\delta^{im} \lambda_t (0)$,  with the choice
\begin{equation}\label{defbetaproof}
\beta_a=-(V^{-1})^b_a \int_0^u dv \, V(v)^{c}_b \tilde{\zeta}_{c}(v).
\end{equation}
We will now check directly that $Y$ satisfies the same ODE as $\tilde X$ with the same boundary condition, namely it is a tangent vector field. The derivative of  Eq. \eqn{defYfield} with respect to $u$ yields:
\begin{equation*}
\begin{aligned}
Y'^i {} & =-\partial_k \Lambda^{ib} X'^k (V^{-1})^c_b \int_0^u dv V_c^{a} \tilde{\zeta}_{a}-\Lambda^{ib}\left[\partial_u (V^{-1})^c_b \int_0^u dv V_c^a \tilde{\zeta}_a+(V^{-1})^c_b V_c^a \tilde{\zeta}_a \right] \\ & +\partial_k \mathcal{E}\delta^{im} X'^k \lambda_t+\mathcal{E}\delta^{im} \lambda'_t.
\end{aligned}
\end{equation*}
By means of Eq. \eqn{eomjacobi4} with $\dot\lambda_u=\{K(\beta,\lambda_t), \lambda_u\}=0$, namely  $\lambda'_t=-\Lambda^{ij} \beta_i \zeta_j$ and  the constraint equation $X'^i=-\Lambda^{ib} \zeta_b+\mathcal{E}\delta^{im} \lambda_u$ we arrive at
\bean
Y'^i &=& \beta_b \zeta_c \left( \Lambda^{kc} \partial_k \Lambda^{ib}+\Lambda^{ik}\partial_k \Lambda^{cb} -\mathcal{E}\delta^{im} \Lambda^{bc}\right)- \Lambda^{ib} \tilde{\zeta}_b\\
&-&\partial_k \mathcal{E}\delta^{im}\Lambda^{kb} \zeta_b \lambda_t-\mathcal{E}\partial_m \left(\Lambda^{ib} \beta_b \lambda_u-\mathcal{E}\delta^{im}\lambda_t \lambda_u\right).
\eean
where we have substituted   
the defining equation for $V$ (\ref{defvmatrix}) and the explicit form  of  $\beta$ Eq. (\ref{defbetaproof}) .
Now using the Schouten bracket  (\ref{jacobiidentitygen}) we obtain
\begin{equation*}
Y'^i=-\partial_k \Lambda^{ib} \zeta_b Y^k-\Lambda^{ib}\tilde{\zeta}_b+\partial_k \mathcal{E}\delta^{im} \lambda_u Y^k+\left(\partial_m \Lambda^{bi} \mathcal{E}-\partial_k \mathcal{E}\delta^{im} \Lambda^{kb} \right) \left(\zeta_b \lambda_t+\beta_b \lambda_u \right).
\end{equation*}
Further implementing  $\mathscr{L}_E \Lambda=0$  we have finally
\begin{equation*}
\left(\partial_m \Lambda^{bi} \mathcal{E}-\partial_k \mathcal{E}\delta^{im} \Lambda^{kb} \right) \zeta_b \lambda_t=0 .
\end{equation*}

The same can be shown for the last term proportional to $\lambda_u$, so we have finally that $Y$ satisfies the linearised constraint in Eq. (\ref{linearizedconstr}) with the same boundary condition.

To conclude, we have proven that the image of $\xi$ is the subspace spanned by $\xi_{\beta, \lambda_t}$ modulo the $2m-2$ conditions (\ref{conditionsproofdim}), i.e. it is a closed subspace of co-dimension $2m-2$.
\end{proof}
Therefore,  similarly to the Poisson sigma model, the constraint manifold quotiented by gauge transformations results to be finite-dimensional, but of dimension equal to $2m-2$, with $m$ the dimension of the target Jacobi manifold.

\subsection{Poissonization}

In this section we review the almost  one-to-one correspondence between the Jacobi sigma model described in the previous sections and the reduced model which may be  obtained on the Jacobi manifold after Poissonisation. 

The idea in \cite{Bascone2021} was  to formulate a Poisson sigma model with target  $(M \times \mathbb{R}, P)$ $P$ being the Poisson tensor described in  Theorem \ref{thpoissonization}, and project its dynamics  
down to $M$.
 Fig. \ref{fig:diagram} illustrates  schematically   the procedure.
\begin{center}
\begin{figure}[ht]
\centering
\includegraphics[width=0.5\linewidth]{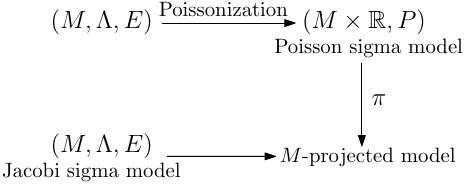}
\caption{Diagrammatic summary of the reduction of the dynamics from the Poisson sigma model to the Jacobi sigma model.}
\label{fig:diagram}
\end{figure}
\end{center}
For this purpose, let  us consider the Poisson sigma model with target space the Poisson manifold $\left(M\times \mathbb{R}, P \right)$ and  Poisson structure $P=e^{-X_0}\left(\Lambda+\frac{\partial}{\partial X_0} \wedge E\right)$ defined in terms of the structures of the embedded Jacobi manifold  $(M,\Lambda,E)$  and $X_0 \in \mathbb{R}$,   according to  Theorem \ref{thpoissonization}. The field configurations are then $(X, \eta)$,  with $X^{I}=(X^i, X^0) : \Sigma \to M \times \mathbb{R}$ the usual embedding maps  and $\eta \in \Omega^1(\Sigma, X^*(T^*(M \times  \mathbb{R})))$,   $\eta_{I}=(\eta_i, \eta_0)$. Capital indices  $I =0, \cdots,m$  label  coordinates over  the Poisson manifold $M \times \mathbb{R}$, while lowercase letters $i=1, \cdots,m$  shall be reserved to  the Jacobi manifold $M$. The Poisson bi-vector field in a coordinate basis {$\{\partial/\partial X^I \}$} can be written explicitly as
\begin{equation}
P^{I J} = e^{-X_0}\begin{pmatrix}
& & & & -E^1\\
& & & & \\
&  & \Lambda^{ij} & &\vdots \\
& & & & \\
& & & & -E^m\\
E^1 & & \cdots &  E^m & 0
\end{pmatrix},
\end{equation}
with $P=P^{I J}\partial_{I} \wedge \partial_{J}$ and $E=E^i \partial_i$. 

By splitting the equations of motion, (\ref{eqmopoisson1}) and (\ref{eqmopoisson2}) in terms of target coordinates adapted to the product manifold, one obtains:
\begin{equation}\label{eqmodecomp1}
dX^i+e^{-X^0}\left(\Lambda^{ij}\eta_j-E^i \eta_0 \right) =0
\end{equation}
\begin{equation}\label{eqmodecomp2}
dX^0+e^{-X^0} E^i \eta_i=0
\end{equation}
\begin{equation}\label{eqmodecomp3}
d\eta_i+\frac{1}{2} e^{-X^0} \partial_i \Lambda^{jk}\eta_j \wedge \eta_k+e^{-X^0}\partial_i E^j \eta_0 \wedge \eta_j=0
\end{equation}
\begin{equation}\label{eqmodecomp4}
d\eta_0-\frac{1}{2} e^{-X^0} \Lambda^{jk}\eta_j \wedge \eta_k-e^{-X^0} E^j \eta_0 \wedge \eta_j=0.
\end{equation}
We now consider the  immersion  {$i: M\hookrightarrow M\times \R$} through the identification  of $M$ with $M \times \{0\}$.   The reduced dynamics on $M$ is thus obtained     by posing $X^0=0$. This yields 
\begin{equation}\label{eqmopoissonization}
\begin{aligned}
{} & dX^i+\Lambda^{ij}\eta_j-E^i \eta_0=0 \\ &
E^i \eta_i=0 \\ &
d\eta_i+\frac{1}{2}\partial_i \Lambda^{jk}\eta_j \wedge \eta_k+\partial_i E^j \eta_0 \wedge \eta_j=0 \\ &
d\eta_0-\frac{1}{2}\Lambda^{jk}\eta_j \wedge \eta_k=0.
\end{aligned}
\end{equation}
On  identifying  $\eta_0$ with $\pi^* \lambda$, $\pi:M\times\R \rightarrow M$  being the projection map, it is immediate to verify that the reduced dynamics coincides with the one obtained from the action functional in Eq. \eqn{jacobiaction}.

However, it is important to remark that the two models are not completely equivalent. In fact, the reduced sigma model inherits  an additional boundary condition for the field $\lambda_t$ which comes from    $\eta_{0|\del\Sigma}=0$ for the Poissonised sigma model.  More precisely, upon performing the splitting of the world-sheet as $\Sigma = \mathbb{R} \times I$ the additional boundary condition requires $\lambda_{t|\del I}=0$ other than $\beta_I=0$, a condition which is   unnecessary for   the model described by the action  \eqn{jacobiaction}. This makes a difference in the analysis of the gauge transformations of the two models, although it is always possible to add this condition by hand if one wants to recover a complete equivalence between the two models. 

\section{Contact and LCS manifolds}
\label{Secexamples}

In this section we will consider in some detail two main classes  of target spaces for the Jacobi sigma model, that is contact and locally conformally symplectic manifolds. As a first application, we shall show that for both cases an interesting result can be stated, which concerns the possibility  of integrating out the auxiliary momenta and obtain a second order formulation of the action functional, solely expressed in terms of the embedding maps $X^i$ and their derivatives.

In general, it is not possible to integrate the auxiliary fields away so to obtain a second order action. As we already recalled in Sec. \ref{Secpoissonsigmamodel}, for the Poisson sigma model this  is possible only when the target space is a symplectic manifold. In that case the Poisson bi-vector can be inverted  and the equations of motion can be solved for $\eta$. We shall see in the following that the situation is different for the Jacobi sigma model,  both for  contact and LCS target.

\subsection{Integration on contact manifolds}\label{intcont}

Let us start by considering $M$ as a $(2n+1)$-dimensional contact manifold with contact one-form $\vartheta$ satisfying $\vartheta \wedge (d\vartheta)^n \neq 0$ at every point. The Jacobi structure can then be obtained from \eqn{contactcond}, or, equivalently, \eqn{contactcond2}-\eqn{contactcond3}.

Let us consider the equations of motion,   represented by \eqn{eomjacobi1}-\eqn{eomjacobi3}, \eqn{eomjacobi4}. Thanks  to the relations satisfied by the contact form, Eqs. \eqn{contactcond2}, \eqn{contactcond3} ,  the former  can be solved for $\eta$ and $\lambda$. In fact, on  multiplying \eqn{eomjacobi1} by $\vartheta_i$ and summing, we  obtain  
\begin{equation}
\vartheta_i(dX^i+ \Lambda^{ij}\eta_j- E^i \lambda)=\vartheta_i dX^i-\lambda=0,
\end{equation}
from which
\begin{equation}\label{contactaux1}
\lambda=\vartheta_i dX^i.
\end{equation}
 In order to obtain  $\eta$ we multiply   \eqn{eomjacobi1}  by $(d\vartheta)_{\ell i}$, and sum over $i$. Again, using the properties of the contact form we find
\begin{equation}
{d\vartheta}_{\ell i}(dX^i+\Lambda^{ij}\eta_j-E^i \lambda)=(d\vartheta)_{\ell i } dX^i+\delta_\ell^j\eta_j=0,
\end{equation}
from which we obtain 
\begin{equation}\label{contactaux2}
\eta_i=(d\vartheta)_{ij} dX^j.
\end{equation}
{Thus, we may} conclude  that the auxiliary fields can be completely integrated away.  Substituting \eqn{contactaux1}-\eqn{contactaux2} back into the action \eqn{jacobiaction} we find the following second order action
\begin{equation}\label{secondordercontact}
S_2=-\frac{1}{2}\int_{\Sigma} (d\vartheta)_{ij} \, dX^i \wedge dX^j= -\frac{1}{2}\int_{\Sigma} X^*(d\vartheta)
\end{equation}
where in the second equality we have restored  the pull-back map in order to highlight the geometric content.  The exterior derivative of the contact one-form takes the role of the $B$-field, which turns out to be closed for contact manifolds.  Despite the analogy with the symplectic case, the latter can only be non-degenerate when appropriately restricted to submanifolds of the target space. 
\subsubsection{Topological Jacobi sigma model on $SU(2)$}
\label{topjacsu2}
As a main example of the model described so far, we consider the  target space to be the group manifold of $SU(2)$, bearing in mind that  the  procedure may be adapted to  any three-dimensional semisimple Lie group. The group manifold is diffeomorphic to the sphere $S^3$. The contact one-form may be chosen among the basis   left-invariant (resp. right-invariant) one-forms of the group, say $\theta^i$ defined through the Maurer--Cartan one-form $\ell^{-1}d\ell=\theta^i e_i \in \Omega^1(SU(2), \mathfrak{su}(2))$,   with $\ell\in SU(2)$, $e_i = i \sigma_i/2$ the Lie algebra generators and $\sigma_i$ the Pauli matrices. Let us   choose, to be definite, the contact one form to be $\vartheta=\theta^3$. The latter defines  a Jacobi bracket according to Eq. \eqn{con1}
 it being
\be
\vartheta\wedge d\vartheta = \Omega
\ee
with $\Omega=\theta^1\wedge\theta^2\wedge \theta^3$ the volume form on the group manifold. {Therefore,} the Reeb vector field and the {bi-vector} field $\Lambda$ are easily determined by solving the equations
\beqa
&&\iota_E \vartheta=1, \quad \iota_{E} d\vartheta=0,\\
&&\iota_{\Lambda} \vartheta=0, \quad  \iota_{\Lambda} d\vartheta=1.
\eeqa
We obtain
\be\label{jacobisu2}
E= Y_3\;\;\; \Lambda = Y_1\wedge Y_2
\ee
with $Y_i, i=1,..,3$ the left invariant vector fields on the group manifold, which are dual the the one-forms $\theta^i$ by definition. Hence, the Reeb vector field is constant and orthogonal to the distribution spanned by the bi-vector field $\Lambda$. 
The action functional of the model is given by
\be
S[\phi,(\eta,\lambda)]= \int_\Sigma \langle\eta, \phi^*(g^{-1} \bd g)\rangle + \frac{1}{2}\langle\eta,(\Lambda\circ \phi)\eta\rangle + \lambda \wedge (E\circ \phi)\eta
\ee
with field configurations $\phi, (\eta, \lambda)$, $\phi: \Sigma \ni (t,u) \rightarrow g \in G$ and $(\eta, \lambda) \in \Omega^1(\Sigma, \phi^*(T^*G \oplus \mathbb{R}))$. We have chosen in this specific example to distinguish the exterior derivative $\bd$ on  the target manifold from the one on the source, $d$. We recall the  boundary condition $\eta(u)v=0, u \in \del \Sigma, v\in T(\del \Sigma)$.

The map $\langle\;,\;\rangle$ establishes a pairing  between  differential forms on $\Sigma$ with values in the pull-back  $\phi^*(T^*G)$ and differential forms on $\Sigma$ with values in $\phi^*(TG)$. 

On identifying the tangent space $TG$ with $G\times \mathfrak{g}$ and $T^*G$ with $G\times \mathfrak{g}^*$  we may write
\be
\phi^*(g^{-1} \bd g)= (g^{-1}\del_t g)^i e_i dt+ (g^{-1}\del_u g)^i e_i du = A^i (t,u) e_i dt + J^i (t,u) e_i du
\ee
where we have introduced the notation 
\be \label{currents}
(g^{-1}\del_t g)^i= A^i, \;\;\;\; (g^{-1}\del_u g)^i =J^i
\ee
 with 
$\{e_i\}$ a basis in the Lie algebra.  Analogously 
\beqa
\eta&=& \eta_{tj} e^j dt + \eta_{uj} e^j du := \beta_j e^j dt +\zeta_j e^j du \\
\lambda&= &\lambda_t dt + \lambda_u du
\eeqa
with $\eta_{tj} =\beta_j$, $\eta_{uj}= \zeta_j$ and $\{e^i\}$ a dual basis in $\mathfrak{g}^*$. Then the action is rewritten as 
\beqa
S[g,(\eta,\lambda)]&=& \int_\Sigma \eta_i \wedge \phi^*(g^{-1} dg)^i+ \frac{1}{2}\Lambda^{ij}\eta_i \eta_j  + \lambda \wedge E^i\eta_i\nonumber\\
&= & \int_\Sigma \left(\beta_i J^i -\zeta_i A^i + \Lambda^{ij} \beta_i \zeta_j +\lambda_t E^j \zeta_j -\lambda_u E^j \beta_j \right) du \, dt
\eeqa
and we have renamed  the map $\phi$ with $g$, to simplify the notation.

Let us now derive the equations of motion. By varying the action with respect to the fields $\zeta, \beta, g, \lambda_t, \lambda_u$ we find
\beqa
A^j &=& -\Lambda^{jl}\beta_l +\lambda_t E^j  \label{su2jacobi1}\\
J^j &=& -\Lambda^{jl}\zeta_l +\lambda_u E^j \label{su2jacobi2}\\
\del_t \zeta_j&=&-(\beta_k J^l -\zeta_k A^l) c^k_{lj} +\del_u \beta_j\label{su2jacobi3}\\
E^j\zeta_j&=&E^j\beta_j=0\label{su2jacobi4}
\eeqa
where we have used, to derive  the third equation,
\beqa
(\delta J)^j&=& (g^{-1}\del_u g)^l (g^{-1}\delta g)^k c^j_{lk} + \del_u (g^{-1}\delta g)^j  \\
(\delta A)^j&=& (g^{-1}\del_t g)^l (g^{-1}\delta g)^k c^j_{lk} + \del_t (g^{-1}\delta g)^j 
\eeqa
and $c^j_{lk}$ are the structure constants of the Lie algebra $\mathfrak{su}(2)$. Let us notice that, with the parameterisation chosen for the source manifold $\Sigma$, the evolutionary equations are the first and the third one, involving time derivatives, whereas the others are constraints. 

In order to make contact with Eqs. \eqn{eomjacobi1}-\eqn{eomjacobi3} previously derived for a generic  target space, we may write Eqs. \eqn{su2jacobi1}-\eqn{su2jacobi4} in compact form 
\beqa
\phi^*(g^{-1}\bd g)^j + \Lambda^{jl}\eta_l -\lambda E^j&=&0 \label{jasu21}\\ 
d\eta_j + \eta_k\wedge\phi^*(g^{-1}\bd g)^l c^k_{lj}&=&0\label{jasu22}\\
E^j \eta_j&= &0.
\eeqa
The first and last one match respectively   Eqs. \eqn{eomjacobi1}, \eqn{eomjacobi3}, once we have identified $X^i$ with the local coordinates describing the map $\phi$ in a  chart. The second equation needs an intermediate step: we   obtain  $\phi^*(g^{-1}\bd g)^j $ from \eqn{jasu21} and replace it in \eqn{jasu22}. We find
\be\label{su2consist}
d\eta_j + \eta_k\wedge \left(-\Lambda^{lm}\eta_m +\lambda E^l\right) c^k_{lj}=0
\ee
Then we observe that 
\be
\Lambda^{lm} c^k_{lj}= \frac{1}{2}( \mathscr{L}_{Y_j} \Lambda)^{mk}\;\;\; {\rm and} \;\;\;E^l c^k_{lj}=-(\mathscr{L}_{Y_j} E)^k
\ee
so that Eq. \eqn{su2consist} becomes
\be
d\eta_j + \frac{1}{2}( \mathscr{L}_{Y_j} \Lambda)^{km}\eta_k\wedge \eta_m  -(\mathscr{L}_{Y_j} E)^k\eta_k\wedge \lambda =0
\ee
and this is exactly Eq. \eqn{eomjacobi2}.

The Lagrangian may also be recast in the following form
\be
L[g,\eta,\lambda]=\int_I  du \left[ -\zeta_i A^i + \beta_i \left( J^i+\Lambda^{ij} \zeta_j- \lambda_u E^i\right)+\lambda_t E^j \zeta_j  \right] 
\ee
with $A^i$ playing now the role of velocities. 
The action is already in its first order form, with Hamiltonian
\be
H_0=- \int du \, \left[ \beta_i \left( J^i-\Lambda^{ij} \pi_j- \lambda_u E^i\right)+\lambda_t E^j \zeta_j \right]
\ee
and 
\be
\pi_i =\frac{\delta L}{\delta A^i}=-\zeta_i
\ee
being the only non-zero  momenta, whereas 
\be
\pi_{\beta_i}=\pi_{\lambda_t}=\pi_{\lambda_u}=0.
\ee   
The latter are primary constraints, which we add to the Hamiltonian to get
\be
H_1=-\int du \, \left[  \beta_i \left( J^i-\Lambda^{ij} \pi_j- \lambda_u E^i\right)+\lambda_t E^j \zeta_j + a_u \pi_{\lambda_u}+ a_t \pi_{\lambda_t} + a_{\beta_i}\pi_{\beta_i} \right].
\ee
In view of performing the Dirac analysis of constraints, the unconstrained phase space of the model may be identified as  the infinite-dimensional manifold $T^*(P(G\times \R^m\times \R\times\R))$, with $PM$ denoting the space of maps from the source space $\Sigma$ to the target manifold $M$. The  configuration fields will be $g: \Sigma\rightarrow G$, \; $\beta_i:\Sigma\rightarrow \R^m, i=i\dots m$ and $\lambda_u, \lambda_t : \Sigma\rightarrow \R$ . 
Then, we read off the {non-zero} Poisson brackets from the canonical one-form 
\be
\Theta=\int_I  du \;  \pi_i  \phi^*(g^{-1} {\bf d} g)^i 
\ee
and its exterior derivative 
\be
\Omega= { d}\Theta= \int_I {d} \pi_i  \wedge \phi^*(g^{-1} {\bf d} g)^i - \pi_i c^i_{jk} \phi^*(g^{-1} {\bf d} g)^j\wedge \phi^* (g^{-1} {\bf d} g)^k
\ee
This yields the non-zero Poisson brackets to be
\beqa
\{\pi_i(u), \pi_j(v)\}&=& c_{ij}^k \pi_k \delta(u-v)\label{piPB}\\
\{\pi_i(u), g(v)\}&=&  i g \sigma_i \delta(u-v)\\
\{g(u), \tilde g(v)\}&=&0
\eeqa
(in particular $\{\pi_i, J^j\}= J^k {c_{ki}}^j \delta(u-v) {+ \delta_{i}^j \delta'(u-v)}
$
), 
to which we add those on the extended phase space
\beqa
\{\pi_{\beta_i}(u), \beta_j(v)\}&=& \delta^{i}_j\delta(u-v)\\
\{\pi_{\lambda_t}(u), \lambda_t(v)\}&=&\delta(u-v)\\
\{\pi_{\lambda_u}(u), \lambda_u(v)\}&=&\delta(u-v). \label{lamPB}
\eeqa
Adapting the analysis of constraints to the present case, we find the secondary constraints 
\beqa
\mathcal{G}_{\lambda_u}&=& -\beta_i E^i =-\beta_3 \\
\mathcal{G}_{\lambda_t}&=& -\pi_i E^i =-\pi_3 \\
\mathcal{G}_{\beta_i}&=&J^i -\Lambda^{ij}\pi_j -\lambda_u \delta_3^i  \eeqa
whose algebra yields
\beqa
\{\mathcal{G}_{\beta_a}(u),\mathcal{G}_{\beta_b}(v)\}&=&\epsilon_{ab}\mathcal{G}_{\lambda_t}(u)\delta(u-v) \\
\{\mathcal{G}_{\lambda_t}(u),\mathcal{G}_{\beta_a}(v)\}&=&\epsilon_{ab}\mathcal{G}_{\beta_b}(u)\delta(u-v)\\
\{\mathcal{G}_{\lambda_t}(u),\mathcal{G}_{\beta_3}(v)\}&=&-\delta'(u-v)\\
\{\mathcal{G}_{\beta_a}(u),\mathcal{G}_{\beta_3}(v)\}&=&J^a(u)\delta(u-v)\\
\{\mathcal{G}_{\beta_i}(u),\pi_{\lambda_u}(v)\}&=&\delta_{i3}\delta(u-v)\\
\{\mathcal{G}_{\lambda_u}(u),\pi_{\beta_i}(v)\}&=&\delta_{i3}\delta(u-v)
\eeqa
all others being zero. Therefore, from imposing the conservation of secondary constraints, we obtain
\beqa
\dot{ \mathcal{G}}_{\lambda_u}&=&a_{\beta_3}\\
\dot{ \mathcal{G}}_{\beta_a}&=&\beta_3 J^a-\epsilon_{ab}(\beta_b { \mathcal{G}}_{\lambda_t}+ \lambda_t {\mathcal{G}}_{\beta_b})\\
\dot{ \mathcal{G}}_{\beta_3}&=& \beta_a J^a- a_u + \del_u \lambda_t 
\eeqa
yielding
\be\label{fixedlagmul}
a_{\beta_3}=\beta_3=0; \;\;\;  a_u= \beta_a J^a+\del_u \lambda_t .
\ee
In agreement with the general results of  Sec. \ref{dirco}, we can conclude that, out of the $2n+4=10$ constraints of the model, four of them are second class, namely
\be
\mathcal{G}_{\lambda_u}, \;\; \pi_{\lambda_u}, \;\; \mathcal{G}_{\beta_3},\;\; \pi_{\beta_3}.
\ee
The dynamics is retrieved by the total Hamiltonian $H_1$, with canonical Poisson brackets \eqn{piPB}-\eqn{lamPB}  and some of the Lagrange multipliers fixed by Eq. \eqn{fixedlagmul}
\be
H_1= \int du\; \left[ \beta_a \mathcal{G}_{\beta_a} + \lambda_t \mathcal{G}_{\lambda_t}+ a_u \pi_{\lambda_u} + a_t \pi_{\lambda_t} + a_{\beta_a} \pi_{\beta_a} \right], \;\;\;\; a=1,2.
\ee
It may be easily verified that 
the algebra of gauge generators 
\be\label{gautr}
K(\beta, \lambda_t)= \int du \; \left[ \beta_a \mathcal{G}_{\beta_a} + \lambda_t \mathcal{G}_{\lambda_t}+a_t \pi_{\lambda_t} + a_{\beta_a} \pi_{\beta_a} \right], \;\;\;\; a=1,2.
\ee
closes according to Eq. \eqn{KKrel}.

To close this section we apply the results of \ref{intcont} to the case of $SU(2)$ for the integration of the fields $\eta$ and $\lambda$. The resulting action is here adapted as
\begin{equation}
S_2=-\frac{1}{2}\int_{\Sigma}  \langle d\vartheta ,(g^{-1}dg) \wedge (g^{-1}dg) \rangle,
\end{equation}
and by writing $d\vartheta$ explicitly we have
\begin{equation}
S_2=-\frac{1}{2}\int_{\Sigma} \epsilon_{ab} (g^{-1}dg)^a \wedge (g^{-1}dg)^b=\int_{\Sigma}  d^2u \, \epsilon_{ab} A^a J^b,
\end{equation}
with degenerate $B$-field $B_{ab}=\epsilon_{ab}$, all other components being zero.

\subsection{Integration on locally conformal symplectic manifolds}

Let us now consider a $2n$-dimensional locally conformal symplectic manifold $M$ with the non-degenerate two-form $\omega$ and closed one-form $\alpha$ satisfying \eqn{lcscond}, or equivalently, and especially useful for our purposes, \eqn{lcscond2}. We have from the latter
\be\label{LambdaE}
\Lambda= \omega^{-1},\;\;\; E^i= (\omega^{-1})^{ij} \alpha_j 
\ee
Therefore, by multiplying  \eqn{eomjacobi1} with $(\omega)_{\ell i }$ we arrive at 
\begin{equation}
(\omega)_{\ell i}(dX^i+\Lambda^{ij}\eta_j-E^i \lambda)=\omega_{\ell j} dX^j+\eta_\ell-\alpha_\ell \lambda=0,
\end{equation}
so that $\eta$ can be written as
\begin{equation}\label{lcsaux}
\eta_\ell=-\omega_{\ell j} dX^j+\alpha_\ell \lambda.
\end{equation}
Note that in this case, differently from   contact manifolds, it is not possible to explicitly decouple $\eta$ and $\lambda$. However, on substituting \eqn{lcsaux}, together with the second  of Eqs. \eqn{LambdaE} into the action functional, after a few simple manipulations  it is possible to verify that the terms proportional to $\lambda$ simplify out and we are left with 
\begin{equation}\label{secondorderlcs}
S_2=\int_{\Sigma} \omega_{ij} \, dX^i \wedge dX^j = \int_\Sigma X^*(\omega)
\end{equation}
where we have restored the pull-back map in the second equality.
Note that this is formally of the same form as \eqn{secondordercontact} and of the $A$-model 
but it differs from both cases. 
In particular, the role of the $B$-field is represented  by the two-form $\omega$ which is non-degenerate and it is not closed since it satisfies \eqn{lcscond}, so in this case there is place for fluxes on the target. Obviously, if $\alpha=0$ the manifold $M$ becomes a symplectic manifold and the theory reproduces the original $A$-model as a particular case.

\subsection{LCS manifolds} 
Examples of LCS manifolds may be built, according to  \cite{Vaisman1985}, in the following way. The starting point is a    contact manifold  $(M^{2n-1}, \theta), \,\, n\ge 2$,  with contact form $\vartheta$.
The manifold $(S^1\times M^{2n-1}, \omega)$ is  LCS with non degenerate 2-form $\omega$ given by
\be\label{lcsomega}
\omega= \vartheta\wedge \alpha + d\vartheta
\ee
where $\alpha\in \Omega^1(S^1)$ the volume form on the circle. Therefore we can easily construct an interesting {non-trivial} example by considering the product $S^1\times S^3$, with $S^3$ the  contact manifold associated with the group $SU(2)$ previously described.  The Jacobi structure $(\Lambda, E)$ can be worked out, yielding
\be
\Lambda= \omega^{-1},\;\;\; E=\Lambda(\alpha)
\ee
which, in local coordinates for the circle $S^1$, with $\alpha= d\phi$ becomes
\be
\Lambda = Y_3\wedge \del\phi -Y_1\wedge Y_2,\;\;\; E= -Y_3.
\ee
According to \cite{Vaisman1985}, as a generalisation of the latter, one can consider principal bundles $(P,M^{2n-1}, U(1))$ with basis the  contact manifold $M^{2n-1}$ and structure group $U(1)$. P may be endowed with the LCS structure \eqn{lcsomega}
where $\alpha$ is the volume form of the structure group $U(1)$ and $\vartheta$ a $U(1)$ connection. 
If the curvature of the connection
$
\psi= d\vartheta
$
is such that $\alpha\wedge\vartheta\wedge(\psi)^{n-1} \ne 0$ (namely it defines a volume form on $P$), then $\omega$ is a LCS which is not globally conformally symplectic.

The two models considered in this section are  new to our knowledge; they cannot  be obtained from the Poisson sigma model, unless adding additional degrees of freedom, and fully rely on   the underlying Jacobi geometry of the target. The LCS model is especially interesting with respect to its property of being equivalent to a Lagrangian model on the tangent manifold TPM with a two form which is neither degenerate nor closed.  In next section we shall see a dynamical generalisation of Jacobi sigma models, where this issue will be discussed again.

\section{Dynamical Jacobi}
\label{Secdynamical}

In this section, we review a non-topological extension of the Jacobi sigma model introduced  in \cite{Bascone2021}, which generalises the approach proposed in \cite{Schupp2012} for the Poisson sigma model. As we already briefly discussed in Sec. \ref{Secpoissonsigmamodel}, it is possible to add a simple non-topological term to the Poisson sigma model action, which is just a Casimir function on the target manifold, so that it does not spoil the gauge invariance. However, another modification is possible, which might have interesting string applications,  in which a dynamical term containing both the metric on $\Sigma$ and on $M$ is considered. 

The action for the dynamical model gets modified with respect to the topological action analysed so far, according to:
\begin{equation}
S(X, \eta, \lambda)=\int_{\Sigma} \left[\eta_i \wedge dX^i+\frac{1}{2}\Lambda^{ij}(X) \,\eta_i \wedge \eta_j-E^i(X) \,\eta_i \wedge \lambda +\frac{1}{2}(G^{-1})^{ij}(X) \, \eta_i \wedge \star \eta_j \right],
\end{equation}
where the metric on the worldsheet $\Sigma$, $g=diag(1,-1)$,  is implemented via the Hodge star operator $\star$, while $G$ is a metric tensor on the target Jacobi manifold $M$.

Since $G$ is non-degenerate by definition, this allows us to integrate the auxiliary fields for a generic Jacobi manifold $M$ so to obtain a Polyakov string action for the embedding maps $X$, as we will see. In fact, the new equations of motion are
\begin{equation}\label{eometric1}
dX^i+\Lambda^{ij}\eta_j-E^i \lambda+(G^{-1})^{ij} \star \eta_j=0,
\end{equation}
\begin{equation}\label{eometric2}
d\eta_i+\frac{1}{2}\partial_i \Lambda^{jk}\eta_j \wedge \eta_k-\partial_i E^j \eta_j \wedge \lambda+\frac{1}{2}\partial_i (G^{-1})^{jk}\eta_j \wedge \star \eta_k=0,
\end{equation}
\begin{equation}\label{eometric3}
E^i \eta_i=0.
\end{equation}
Being $G$ naturally non-degenerate we can solve \eqn{eometric1} for  $\star \eta$, 
\begin{equation}\label{eqstareta}
\star \eta_j=-G_{ij}\left(dX^i+\Lambda^{ik}\eta_k-E^i \lambda \right)
\end{equation}
and obtain $\eta$ by applying   the Hodge star to the latter 
\begin{equation}\label{eqeta}
\eta_p=-{(M^{-1})^j}_p G_{ij}\left(\star dX^i-\Lambda^{ik}G_{\ell k} dX^{\ell}+\Lambda^{ik}G_{\ell k}E^{\ell}\lambda-E^i \star \lambda \right),
\end{equation}
with  ${M^p}_j={\delta^p}_j-G_{j i}\Lambda^{ik}G_{k \ell}\Lambda^{\ell p}$  a   symmetric matrix, which we may assume to be non-degenerate irrespective of the  rank  of  $\Lambda$. 
The action becomes then 
\begin{equation}\label{actionlambda}
\begin{aligned}
S(X, \lambda) = \int_{\Sigma} {} & \bigg[\frac{1}{2}{(M^{-1})^p}_i G_{jp} \, dX^i \wedge \star dX^j- \frac{1}{2}{(M^{-1})^p}_i G_{\ell p} \Lambda^{\ell k} G_{jk} \, dX^i \wedge dX^j \\ & - \frac{1}{2}{(M^{-1})^p}_i G_{\ell p}\Lambda^{\ell k} G_{mk} E^m \lambda \wedge dX^i+ \frac{1}{2}{(M^{-1})^p}_i G_{\ell p} E^{\ell} \star \lambda \wedge dX^i \bigg].
\end{aligned}
\end{equation}
In order to integrate out the remaining auxiliary field,  $\lambda$, 
we use  the inner product on the space of one-forms, 
\begin{equation}
\int_{\Sigma} \star \lambda \wedge dX=-\int \lambda \wedge \star dX,
\end{equation}
so that  \eqn{actionlambda} $\lambda$ becomes nothing more than a Lagrange multiplier imposing the geometric constraint
\begin{equation}\label{constraintpolyakov}
(M^{-1})_{i \ell}\left( \Lambda^{\ell k}G_{mk}E^m dX^i+E^{\ell} \star dX^i\right)=0.
\end{equation}
This finally leads  to the result that  the term proportional to  $\lambda$ vanishes on-shell and we are left with the second order action
\begin{equation}\label{PolJac}
S=\int_{\Sigma} \left[g_{ij} dX^i \wedge \star dX^j+B_{ij} dX^i \wedge dX^j \right]
\end{equation}
where the  metric $g$ and the $B$-field are defined in terms of  $G$ and $M$ according to:
\begin{equation}\label{metricbfield}
g_{ij}=G_{jp} {(M^{-1})^p}_i, \quad B_{ij}=G_{ik}{(M^{-1})^p}_j G_{ p \ell}\Lambda^{\ell k}.
\end{equation}
To summarise, we have obtained  a Polyakov string action, with target space a Jacobi manifold, represented by Eq. \eqn{PolJac}. The  Jacobi bi-vector field $\Lambda$ enters   the definition of the metric and the B-field, while the Reeb vector field $E$  is part of   the constraint equation \eqn{constraintpolyakov}.

\subsection{Dynamical model on $SU(2)$}

To give an example of the Polyakov action obtained in \eqn{PolJac} we consider again the $SU(2)$ group manifold as target, so to obtain the dynamical completion to the topological model already considered in Sec \ref{topjacsu2}. In particular, as a metric tensor on the target we introduce the natural Cartan--Killing metric on $SU(2)$: $G_{ij}=\delta_{ij}$. By using $G_{ij}=\delta_{ij}$ and $\Lambda^{ij}=\epsilon^{3ij}$, the metric $g$ and $B$-field are then obtained from \eqn{metricbfield} as 
\begin{equation}\label{backgsu2}
g_{ij} = h_{ij}= \delta_{ij}-\frac{1}{2}\epsilon_{ik3}\delta^{kl}\epsilon_{jl3}, \quad B_{ij}=-\frac{1}{2}\epsilon_{3ij},
\end{equation}
so to have 
\begin{equation}
S=\int_{\Sigma} \left[h_{ij} (g^{-1}dg)^i \wedge \star (g^{-1}dg)^j-\frac{1}{2}\epsilon_{3ij} (g^{-1}dg)^i \wedge (g^{-1}dg)^j \right].
\end{equation}
From the analysis of the previous section we know that this action has to be complemented with the geometric constraint in Eq. (\ref{constraintpolyakov}), i.e. in this case
\be
(g^{-1}dg)^3=0.
\ee 
It is interesting to note the form of the background metric $h$ in \eqn{backgsu2}. This metric has been already obtained in the context of Poisson--Lie duality of $SU(2)$ sigma models \cite{Marotta2019, Bascone2020, Bascone2020a, PV19, Marotta2018, Pezzella2019} as a non-degenerate metric for the group manifold of $SB(2, \mathbb{C})$, the Borel subgroup of $SL(2,\mathbb{C})$ of upper triangular matrices with complex elements with real diagonal and unit determinant.  The latter plays the role of the Poisson-Lie dual of $SU(2)$ in the Manin triple decomposition of the group $SL(2,\mathbb{C})$.  Therefore, it is an interesting question to understand the possible relation between the two models. Interestingly, Poisson-Lie groups are discussed in \cite{deLeon1997} in relation with Jacobi structures. We plan to come back to  this question in future investigations.

\section{Discussion}
Let us summarise the main aspects of the model. The Jacobi sigma model is a generalisation of the well known Poisson sigma model.  It  is a two-dimensional topological non-linear gauge theory describing strings moving on a Jacobi background. 
It   can be related to a field theory with a higher dimensional target, which is  a Poisson sigma model for the `Poissonised' manifold $M\times \R$. The so called  Poissonisation procedure consists in the construction of a homogeneous Poisson structure on  $M \times \mathbb{R}$ from the Jacobi structure on the Jacobi manifold $M$. The two models may be seen to yield the same dynamics, after reduction, provided we impose extra constraints at the boundary.

We  have analysed the canonical formulation of the model, which exhibits first and second class constraints, with the former  generating gauge transformations. Interestingly, it is possible to establish an homomorphism between the algebra of gauge transformations and the algebra of sections of the 1-jet bundle $J^1M$, which generalises an analogous result for the Poisson sigma model, where the role of $J^1M$  is played by $T^* M$. The reduced phase space of the model, which is obtained  as the  manifold of constraints modulo gauge symmetries,  has finite dimension, equal to $2 \, \text{dim}M-2 $.

Two main classes of target spaces have been explicitly considered, namely  contact and locally conformal symplectic manifolds. We have shown that in both cases the auxiliary fields can be integrated out and a second-order action description in terms  of the sole embedding maps can be given. In the case of the Poisson sigma model, this  is only possible if the target manifold is  symplectic, so that the Poisson bi-vector can be inverted;  in such a case  the resulting theory is that of a $A$-model and the $B$-field is the symplectic two-form.
For the models at hand we obtain different results: 
on contact manifolds the resulting $B$-field is the exterior derivative of the contact one-form, which   is closed but  degenerate, while for the locally conformal symplectic manifolds the $B$-field is the LCS two-form $\omega$  which is neither degenerate nor closed, allowing for the possibility of generating  fluxes without the need to twist the model. A similar situation occurs for dynamical models (cfr. Eq. \eqn{metricbfield}). In view of the importance of fluxes in relation with  string compactification,  the occurrence of two-forms which are not closed in the context of LCS manifolds is, therefore, interesting and needs to be further investigated.  The original $A$-model of string theory is naturally recovered from the locally conformal symplectic case when the one-form is identically vanishing. The group manifold of $SU(2)$ has been considered as an explicit example of contact manifold. As for interesting examples of  LCS manifolds, we have shortly reviewed a constructive procedure due to Vaisman and shown that the manifold $SU(2)\times U(1)$ may be endowed with a  Jacobi structure. Examples of Jacobi manifolds which are neither contact nor LCS may be found in \cite{deLeon1997}. They include dual algebras of Poisson--Lie groups, which we think could be of interest in the context of Poisson--Lie T--Duality. We plan to address the problem in future work.

Finally, we have reviewed   a dynamical extension of the model, which is obtained  by adding  a metric term to the action functional.  On integrating out   the auxiliary fields,  a Polyakov action is obtained, with a   metric and $B$-field,  which are  explicitly written in terms of  the Jacobi bi--vector field $\Lambda$. The model is  supplemented by  a geometric constraint which is related to the Reeb vector field.

Future directions of research include the quantisation of the model, its relation with Poisson-Lie symmetry and duality and the groupoid structure of the reduced phase space. Moreover,  the possibility of having  non-closed B-fields in the context of LCS manifolds, both for the topological and dynamical models, shall be further investigated.

%\funding{This research was partially funded by the INFN Specific Initiatives STEFI and GEOSYM.}

\noindent{\bf Acknowledgements}The authors are especially indebted with Fabio Di Cosmo for letting them know the work of Vaisman \cite{Vaisman2000} where the Koszul bracket is extended to Jacobi manifolds. Fruitful  discussions with Ivano Basile, Goffredo Chirco and Florio Ciaglia are also  acknowledged, which helped clarifying  many aspects of the model.

\end{document}